\newtheorem{theorem}{Theorem}
\newcommand{\minor}{\preceq}
\newtheorem{corollary}[theorem]{Corollary}
\newtheorem{lemma}[theorem]{Lemma}
\newtheorem{assumption}[theorem]{Assumption}
\newcommand{\Oof}{\mathcal{O}}
\newcommand{\CCC}{\mathcal{C}}
\newcommand{\depthone}{\text{depth-$1$ }}
\title{A new analysis of a local algorithm for the minimum dominating set problem on bounded genus graphs}
\title{A local constant factor approximation for the\\minimum dominating set problem on bounded genus graphs}
\author{Saeed Akhoondian Amiri\thanks{Technical University Berlin, Germany, 
    \texttt{saeed.amiri@tu-berlin.de}.} \quad Stefan Schmid\thanks{Aalborg University, Denmark,  \texttt{schmiste@cs.aau.dk}} \quad Sebastian
Siebertz\thanks{Technical University Berlin, Germany, \texttt{sebastian.siebertz@tu-berlin.de}}}
\date{}
\begin{document}

\maketitle

\thispagestyle{empty}
\begin{abstract}
The Minimum Dominating Set (MDS) problem is not only one of the most fundamental
problems in distributed computing, it is also one of the most challenging ones.
While it is well-known that  
minimum dominating sets cannot be approximated locally on general graphs, 
over the last years, several breakthroughs have been made on
computing local approximations on \emph{sparse graphs}. 

This paper presents a deterministic and local constant factor approximation 
for minimum dominating sets on bounded genus
graphs, a large family of sparse graphs. 
Our main technical contribution is a new analysis of
a slightly modified, first-order definable variant of an existing algorithm by Lenzen et al.
Interestingly, unlike existing proofs for planar graphs, our
analysis does not rely on any topological arguments.
We believe that our techniques can be useful for the study
of local problems on sparse graphs beyond
the scope of this paper.
 

\end{abstract}

\pagebreak
\setcounter{page}{1}

\section{Introduction}\label{sec:intro}

This paper attends to the Minimum Dominating
Set (MDS) problem, arguably one of the most intensively 
studied graph theoretic problems in computer science in general,
as well as in distributed computing.

A dominating set~$D$ in a graph~$G$ is a set of vertices
such that every vertex of~$G$ either lies in~$D$ or is adjacent to a vertex in
$D$. 
Finding a minimum dominating set is NP-complete, even on
planar graphs, however, the problem can be approximated well 
on planar graphs~\cite{Baker:1994:AAN:174644.174650} 
and on classes that exclude a fixed minor~\cite{Grohe:2003:LTE:1005109.1005115}.

In this paper, we study the \emph{distributed} time complexity of
finding small dominating sets, in the classic \textit{LOCAL model} 
of distributed computing~\cite{elkin-survey,Linial:1992:LDG:130563.130578,Peleg:2000:DCL:355459,local-survey}. 
It is known that finding small dominating sets locally is hard: 
Kuhn et al.~\cite{kuhn2010local} show that in~$r$ rounds the 
MDS problem on an~$n$-vertex graphs of maximum degree 
$\Delta$  can only be approximated within factor~$\Omega(n^{c/r^2})$ and~$\Omega(\Delta^{c'/r})$, where~$c$ and~$c'$ are constants. 
This implies that, in general, to achieve a constant approximation ratio, 
every distributed algorithm requires at least~$\Omega(\sqrt{\log n})$ and~$\Omega(\log \Delta)$ communication rounds. 
Kuhn et al.~\cite{kuhn2010local} also provide the currently best approximation algorithm on general graphs, which achieves a~$(1+\epsilon)\ln \Delta$-approximation in~$\Oof(\log(n)/\epsilon)$ rounds for any~$\epsilon>0$. 

For sparse graphs, the situation is more promising.
For graphs of arboricity~$a$, Lenzen and Wattenhofer~\cite{ds-arbor}
present a forest decomposition algorithm achieving a factor~$\Oof(a^2)$ approximation
in randomized time~$\Oof(\log n)$, and a deterministic~$\Oof(a \log
\Delta)$ approximation algorithm
requiring time~$\Oof(\log \Delta)$ rounds. Graphs of bounded arboricity include all graphs which exclude a fixed graph as a (topological) minor and in particular, all planar graphs and any class of bounded genus. 
Czygrinow et al.~\cite{fast-planar} show
that given any~$\delta>0$,~$(1+\delta)$-approximations of a maximum independent
set, a maximum matching, and a minimum dominating set, can be computed in
$\Oof(\log^* n)$ rounds in planar graphs, which is asymptotically optimal~\cite{ds-alternative-lowerbound}.
Lenzen et al.~\cite{ds-planar} proposed a constant factor
approximation on planar graphs that can be computed locally in a
constant number of communication rounds. A finer analysis of
Wawrzyniak~\cite{better-upper-planar} showed that the algorithm of
Lenzen et al.\ in fact computes a~$52$-approximation of a minimum dominating set. 
In terms of lower bounds, Hilke et al.~\cite{ds-ba} show that there is no 
deterministic local algorithm (constant-time distributed graph algorithm) that 
finds a~$7-\epsilon$-approximation of a minimum dominating set on planar graphs, 
for any positive constant~$\epsilon$.

\textbf{Our Contributions.} 
The main contribution of this paper is 
a deterministic and local constant factor approximation 
for MDS on bounded genus
graphs. Concretely, we show that MDS can be~$\Oof(g)$-approximated
locally and deterministically on graphs of (both orientable
or non-orientable) genus~$g$.
This result generalizes existing constant factor approximation
results for planar graphs to a significantly larger graph family. 

Our main \emph{technical} contribution is a new analysis of
a sligthly modified variant of the elegant 
algorithm by Lenzen et al.~\cite{ds-planar} for planar graphs.
In particular, we introduce the natural notion of
\emph{locally embeddable graphs}: 
a locally embeddable graph 
excludes the complete bipartite graph~$K_{3,t}$ 
as a \depthone minor, that is, as minors obtained by star contractions. 
Prior works by 
Lenzen et al.~\cite{ds-planar} and Wawrzyniak~\cite{better-upper-planar} 
heavily depend 
on the planar properties of a graph: For example, their analyses
exploit the fact that each cycle in the graph
defines an ``inside'' and an ``outside'' region,
without any edges connecting the two; this facilitates a simplified
accounting and comparison to the optimal solution. 
In the case of bounded genus graphs or locally embeddable graphs,
such a global property does not exist,
and relying on non-contractible cycles can be costly: such
cycles can be very long. 
In contrast, in this paper we leverage the inherent
local properties of our low-density graphs,
which opens a new door to approach the problem. 
A second interesting technique developed in this paper
is based on \emph{preprocessing}: we show that the constants
involved in the approximation can be further improved 
by a local preprocessing step. 


We believe that our new analysis and techniques 
can be useful also for the study
of other local problems and on more general sparse graphs, beyond
the scope of this paper.

An interesting side contribution of our modified algorithm is that
it is  \emph{first-order definable}. In particular, our 
algorithm 
does not rely on any \emph{maximum} operations, such as
finding the neighbor of maximal degree. 
The advent of
sub-microprocessor devices, such as 
biological cellular networks or networks of nano-devices, 
has recently motivated the design of very simple, ``stone-age'' distributed algorithms~\cite{stoneage},
and we believe that our work nicely complements
the finite-state machine model assumed in related work,
and opens an interesting field for future research.


\textbf{Organization.} 
The remainder of this paper is organized as follows. We introduce some preliminaries in Section~\ref{sec:model}. Our basic local
approximation result is presented in Section~\ref{sec:local-approx},
and the improved approximation, using preprocessing,
is presented in  Section~\ref{sec:improved}.
After discussing a logic perspective on our work in Section~\ref{sec:stoneage},
we conclude our contribution in Section~\ref{sec:FO}.

\section{Preliminaries}\label{sec:model}

\textbf{Graphs.}
We consider finite, undirected, simple graphs. For a graph~$G$, we write
$V(G)$ for its vertices and~$E(G)$ for its edges. Two vertices~$u,v\in V(G)$ are adjacent or neighbors in~$G$ iff~$\{u,v\}\in E(G)$. The degree~$d_G(v)$ of a
vertex~$v\in V(G)$ is its number of neighbors in~$G$. 
We write~$N(v)$ for the set of neighbors and~$N[v]$ for the closed 
neighborhood~$N(v)\cup\{v\}$ of~$v$. We let~$N^1[v]:=N[v]$ and 
$N^{i+1}[v]:=N[N^i[v]]$ for~$i>1$. For~$A\subseteq V(G)$, we 
write~$N[A]$ for~$\bigcup_{v\in A}N[v]$. 
The edge density of~$G$
is~$\epsilon(G)=|E(G)|/|V(G)|$. For~$A\subseteq V(G)$, the graph
$G[A]$ induced by~$A$ is the graph with vertex set~$A$ and edge set
$\{\{u,v\}\in E(G) : u,v\in A\}$. For~$B\subseteq V(G)$ we write~$G-B$
for the graph~$G[V(G)\setminus B]$. A graph~$H$ is a subgraph of a 
graph~$G$ if~$V(H)\subseteq V(G)$ and~$E(H)\subseteq E(G)$. 

\textbf{Depth-$\mathbf{1}$ minors and local embeddable graphs.}
A graph~$H$ is a minor of a graph~$G$, written~$H\minor G$, if there is 
a set~$\{G_v : v\in V(H)\}$ of pairwise disjoint connected subgraphs 
$G_v\subseteq G$ such that if~$\{u,v\}\in E(H)$, then there is an edge 
between a vertex of~$G_u$ and a vertex of~$G_v$. We say that~$G_v$ is
contracted to the vertex~$v$ and we call~$G[\bigcup_{v\in V(H)}V(G_v)]$ 
a \emph{minor model} of~$H$ in~$G$. 

A star is a connected graph~$G$ such that at most one 
vertex of~$G$, called the center of the star, has degree
greater than one.~$H$ is a \emph{\depthone minor} of~$G$ if~$H$ is 
obtained from a subgraph of~$G$ by star contractions, that is, if 
there is a set~$\{G_v : v\in V(H)\}$ of pairwise disjoint 
stars~$G_v\subseteq G$ such that if~$\{u,v\}\in E(H)$, then 
there is an edge between a vertex of~$G_u$ and a vertex of~$G_v$.

We write~$K_{t,3}$ for the complete bipartite
graph with partitions of size~$t$ and~$3$, respectively. 
A graph~$G$ is a \emph{locally embeddable graph} if it
excludes~$K_{3,t}$ as a \depthone minor for some~$t\ge 3$. 

\textbf{Dominating set.}
Let~$G$ be a graph. A set~$M\subseteq V(G)$ \emph{dominates}~$G$ if all vertices of
$G$ lie either in~$M$ or are adjacent to a vertex of~$D$, that is, if~$N[M]=V(G)$. 
A minimum
dominating set~$M$ is a dominating set of minimum cardinality 
(among all dominating set). The size of a minimum dominating set of~$G$ 
is denoted~$\gamma(G)$. 

\textbf{$f$-Approximation.} Let~$f:\mathbb{N}\rightarrow\mathbb{R}^+$.
Given an~$n$-vertex graph~$G$ and a set~$D$ of~$G$, we say that~$D$ is 
an~$f$-approximation for the dominating set problem if~$D$ is a dominating 
set of~$G$ and~$|D| \leq f(n)\cdot \gamma(G)$. An algorithm computes an 
$f$-approximation for the dominating set problem on a class~$\CCC$ of graphs 
if for all~$G\in\CCC$ it computes a set~$D$ which is an~$f$-approximation 
for the dominating set problem. 

\textbf{Bounded genus graphs.}
We now provide some 
background on graphs on surfaces. 
A more comprehensive discussion
can be
found in~\cite{graphsurface}.


The genus of a connected surface~$\Sigma$ is the maximum number 
of cuttings along non-intersecting closed simple curves without 
making the resulting surface disconnected. We speak of the 
\emph{orientable genus} of~$\Sigma$ if~$\Sigma$ is orientable and of its 
 \emph{non-orientable genus}, otherwise.
 As all our results apply to both variants, for ease of presentation,
 and as usual in the literature, we will not mention them explicitly in the following.


The (orientable, resp.\ non-orientable) genus of a graph is the 
minimal number~$\ell$ such that the graph can be embedded on 
an (orientable, resp.\ non-orientable) surface of genus~$\ell$. 
We write~$g(G)$ for the orientable genus of~$G$ and~$\tilde{g}(G)$ for the 
non-orientable genus of~$G$. Every connected planar 
graph has orientable genus~$0$ and non-orientable genus~$1$. 
In general, for connected~$G$ we have~$\tilde{g}(G)\leq 2g(G)+1$. 
On the other hand, there is no bound for~$g(G)$ in terms of~$\tilde{g}(G)$.

We can define the genus of a surface in a combinatorial fashion as well.
Let~$G=(V,E)$ be a graph embedded on a surface~$\Pi$. 
The number~$\chi(\Pi)= |V| - |E| + |F|$ is a fixed number called the Euler 
characteristic of the embedding~$\Pi$. $V$ and $E$ are the number of vertices and edges
and $F$ is the number of faces w.r.t. $\Pi$. If~$\Pi$ is orientable then~$g(\Pi) = 1-1/2\chi(\Pi)$ 
is the orientable genus of~$\Pi$ and analogousely if~$\Pi$ is not 
orientable then~$\tilde{g}(\Pi) = 2-\chi(\Pi)$ is the non-orientable genus of~$\Pi$. 
The genus of~$\Pi$ and~$g(G)$ are equal if~$\Pi$ is a minimal surface that 
$G$ embeds on.

Considering that graphs of genus~$g$ are closed on 
taking subgraphs and edge contraction, we have the following lemma.
\begin{lemma}\label{lem:closureminor}
If~$H\minor G$, then~$g(H)\leq g(G)$ and~$\tilde{g}(H)\leq \tilde{g}(G)$. 
\end{lemma}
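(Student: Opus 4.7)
The plan is to reduce to the three elementary minor operations (edge deletion, isolated-vertex deletion, edge contraction) and to show that each of them can only decrease, never increase, the genus. Once this is established, the lemma follows because any minor model $\{G_v : v \in V(H)\}$ can be turned into $H$ by first restricting $G$ to $\bigcup_v V(G_v)$ together with the edges witnessing adjacencies in $H$, and then, for each $v \in V(H)$, contracting the edges of an arbitrary spanning tree of $G_v$ (possibly followed by deletion of parallel edges or loops to stay in the simple-graph setting). Thus it suffices to verify monotonicity under each of the three primitive operations, for both the orientable and non-orientable case.

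For deletion, the argument is immediate. Fix a $2$-cell embedding of $G$ on a surface $\Sigma$ realizing the genus of $G$. Removing an edge from that drawing leaves a drawing of $G-e$ on the same surface $\Sigma$, and deleting an isolated vertex does the same for $G-v$. Consequently $g(G-e) \le g(G)$, $g(G-v) \le g(G)$, and analogously for $\tilde g$. This already handles the ``subgraph'' part of the minor definition.

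For edge contraction, fix again an embedding of $G$ on a surface $\Sigma$ of minimum genus. An edge $e = \{u,v\}$ appears on $\Sigma$ as a simple arc joining $u$ and $v$; choose a small tubular neighbourhood of this arc that meets no other edge and contains no other vertex. Collapsing this neighbourhood (equivalently, shrinking the arc continuously to a point and identifying its two endpoints) is a homeomorphism of $\Sigma$ onto itself on the complement and yields an embedding of $G/e$ on the \emph{same} surface $\Sigma$. Combinatorially, this is reflected in the Euler formula: $V$ and $E$ each decrease by one while $F$ is unchanged, so $\chi$ is preserved, hence so are $g$ and $\tilde g$. If contracting $e$ produces parallel edges or loops that we delete to keep the graph simple, we are back in the deletion case, which only helps. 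Therefore $g(G/e) \le g(G)$ and $\tilde g(G/e) \le \tilde g(G)$.

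Combining the three operations inductively along a sequence that realizes $H$ as a minor of $G$ gives $g(H) \le g(G)$ and $\tilde g(H) \le \tilde g(G)$, as required. The only nontrivial step is the contraction step; all the geometric content is concentrated in the observation that an embedded arc can be collapsed to a point without changing the ambient surface, which is the standard topological fact underlying minor-monotonicity of genus.
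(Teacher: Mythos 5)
Your proof is correct and follows exactly the route the paper intends: the paper gives no detailed argument, merely noting that genus is closed under taking subgraphs and edge contractions, and your write-up supplies the standard topological details (deleting from an embedding, collapsing an embedded arc, and composing along spanning trees of the branch sets) for that very reduction. Nothing further is needed.
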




One of the arguments we will use is based on the fact that bounded 
genus graphs exclude large bipartite graphs as minors 
(and in particular as \depthone minors). The lemma follows 
immediately from Lemma~\ref{lem:closureminor} and from the fact that 
$g(K_{m,n})=\left\lceil \frac{(m-2)(n-2)}{4}\right\rceil$ and 
$\tilde{g}(K_{m,n})=\left\lceil \frac{(m-2)(n-2)}{2}\right\rceil$ 
(see e.g.\ Theorem~4.4.7 in~\cite{graphsurface}). 

\begin{lemma}\label{lem:exclude}
If~$g(G)=g$, then~$G$ excludes~$K_{4g+3,3}$ as a minor 
and if~$\tilde{g}(G)=\tilde{g}$, then~$G$ excludes~$K_{2\tilde{g}+3,3}$ as a minor. 
\end{lemma}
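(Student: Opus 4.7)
The plan is to argue by contradiction, using the two ingredients already collected just before the statement: the minor-monotonicity of genus (Lemma~\ref{lem:closureminor}) and the exact genus formulas for complete bipartite graphs cited from~\cite{graphsurface}.

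First I would handle the orientable case. Suppose for contradiction that $K_{4g+3,3}\minor G$ with $g(G)=g$. By Lemma~\ref{lem:closureminor}, this implies $g(K_{4g+3,3})\leq g$. On the other hand, plugging $m=4g+3$ and $n=3$ into $g(K_{m,n})=\lceil (m-2)(n-2)/4\rceil$ gives
\[
g(K_{4g+3,3}) \;=\; \left\lceil \frac{(4g+1)\cdot 1}{4}\right\rceil \;=\; g+1,
\]
since $4g+1$ leaves remainder $1$ modulo $4$. Combining the two inequalities yields $g+1\le g$, a contradiction, so $K_{4g+3,3}$ cannot be a minor of~$G$.

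The non-orientable case is completely analogous. Assume $K_{2\tilde{g}+3,3}\minor G$ with $\tilde{g}(G)=\tilde{g}$; Lemma~\ref{lem:closureminor} then forces $\tilde{g}(K_{2\tilde{g}+3,3})\le \tilde{g}$. Using $\tilde{g}(K_{m,n})=\lceil (m-2)(n-2)/2\rceil$ with $m=2\tilde{g}+3$, $n=3$ gives
\[
\tilde{g}(K_{2\tilde{g}+3,3}) \;=\; \left\lceil\frac{(2\tilde{g}+1)\cdot 1}{2}\right\rceil \;=\; \tilde{g}+1,
\]
again because $2\tilde{g}+1$ is odd. This contradicts $\tilde{g}+1\le\tilde{g}$ and finishes the proof.

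There is no real obstacle here: the statement is essentially a numerical unpacking of the two formulas, and the only thing to be careful about is verifying that the ceilings indeed evaluate to $g+1$ and $\tilde{g}+1$ rather than to $g$ or $\tilde{g}$ (which is why the parameters $4g+3$ and $2\tilde{g}+3$ are chosen exactly at the first values that make the numerator exceed $4g$ and $2\tilde{g}$, respectively). Once this arithmetic is checked, minor-monotonicity of the genus does the rest.
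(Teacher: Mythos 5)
Your proof is correct and is exactly the argument the paper intends: the lemma is stated as following immediately from Lemma~\ref{lem:closureminor} together with the genus formulas for $K_{m,n}$, and your computation $\lceil (4g+1)/4\rceil = g+1$ and $\lceil (2\tilde{g}+1)/2\rceil = \tilde{g}+1$ fills in precisely the arithmetic the authors leave implicit. Nothing to add.
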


Graphs of bounded genus do not contain 
many disjoint copies of minor models of~$K_{3,3}$: 
this is a simple consequence of the fact that the orientable 
genus of a connected graph is equal to the sum of the genera 
of its blocks and a similar statement holds for the non-orientable genus, 
see Theorem~4.4.2 and Theorem~4.4.3 in~\cite{graphsurface}.

\begin{lemma}
\label{lem:decreasegenus}
$G$ contains at most~$\max\{g(G), 2\tilde{g}(G)\}$ disjoint copies of minor models of~$K_{3,3}$. 
\end{lemma}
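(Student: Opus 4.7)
The plan is to show that the existence of $k$ pairwise vertex-disjoint minor models $M_1,\dots,M_k$ of $K_{3,3}$ in $G$ forces both $g(G)\ge k$ and $2\tilde g(G)\ge k$, which together yield the stated bound $k\le\max\{g(G),2\tilde g(G)\}$.

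I would first observe that, by Lemma~\ref{lem:closureminor}, each $M_i$ satisfies $g(M_i)\ge g(K_{3,3})=1$ and $\tilde g(M_i)\ge\tilde g(K_{3,3})=1$. Since $K_{3,3}$ is $2$-connected (in fact $3$-connected), a standard result tells us that any $2$-connected minor of $G$ is already a minor of some block of $G$. I would use this to assign each $M_i$ to a block $B_i$ of $G$ that still has $K_{3,3}$ as a minor. If several $M_i$'s end up in the same block $B$, their pairwise disjointness ensures that $B$ has the disjoint union of the corresponding $K_{3,3}$'s as a minor, and hence both $g(B)$ and $\tilde g(B)$ are at least the number of $M_i$'s associated with $B$ (by Lemma~\ref{lem:closureminor} combined with the fact that the orientable and non-orientable genera of a disjoint union of non-planar connected graphs are at least the number of non-planar components).

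I would then invoke the block additivity theorems cited in the paragraph preceding the statement. For the orientable case this is the clean identity $g(G)=\sum_B g(B)$ from Theorem~4.4.2 of \cite{graphsurface}; summing the per-block inequalities of the previous paragraph yields $g(G)\ge k$. For the non-orientable case I would apply the Stahl--Beineke style statement of Theorem~4.4.3 of \cite{graphsurface}, which is known to lose up to a factor of two with respect to the orientable identity, because crosscaps may effectively be shared across cut vertices; combining this with the per-block bounds gives $2\tilde g(G)\ge k$. The main obstacle is precisely this non-orientable step: the additivity of $\tilde g$ over blocks is considerably more delicate than in the orientable case, and some care is required to extract the correct formulation of Theorem~4.4.3 so as to recover the factor of two in the final inequality. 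All other ingredients are straightforward consequences of Lemma~\ref{lem:closureminor} and the $2$-connectivity of $K_{3,3}$.
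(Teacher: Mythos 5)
Your proposal is correct and follows essentially the same route as the paper, which derives the lemma from the block additivity of the orientable genus (Theorem~4.4.2 of~\cite{graphsurface}) and the Stahl--Beineke analogue for the non-orientable genus (Theorem~4.4.3), combined with the fact that each minor model of the $2$-connected graph $K_{3,3}$ lives inside a single block and contributes at least one to that block's (Euler) genus. Your version is in fact more detailed than the paper's one-sentence justification, and your identification of the non-orientable additivity defect as the source of the factor $2$ is exactly the right point of care.
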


Finally, note that graphs of bounded genus have a small 
edge density. 

\begin{lemma}\label{lem:dens}
For any graph $G$, we always have $|E(G)| \leq 3 \cdot |V(G)| + 6 g(G) - 6$ 
and~$|E(G)| \leq 3 \cdot |V(G)| + 3 \tilde{g}(G) - 3$ 
\end{lemma}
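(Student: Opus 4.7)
The plan is to combine Euler's formula with the standard double-counting argument between edges and faces, which is the textbook derivation of such edge-density bounds from a genus embedding.

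First I would reduce to the case where $G$ is connected and admits a $2$-cell embedding on a surface of the appropriate genus. If $G$ is disconnected, one may add edges between distinct components without increasing the genus until a single component remains; since adding edges only strengthens the bound we want to prove on the original graph, it suffices to establish it under these assumptions. Similarly, if $|V(G)|\le 2$ the inequalities are trivial for simple graphs, so we may assume $|V(G)|\ge 3$.

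Next, fix an embedding $\Pi$ of $G$ on a surface of (orientable, resp.\ non-orientable) genus $g(G)$ (resp.\ $\tilde g(G)$). Write $V,E,F$ for the numbers of vertices, edges and faces of $\Pi$. By the combinatorial definition recalled just before the lemma,
\[
V-E+F \;=\; \chi(\Pi),
\]
where $\chi(\Pi)=2-2g(G)$ in the orientable case and $\chi(\Pi)=2-\tilde g(G)$ in the non-orientable case. Since $G$ is simple with at least three vertices, every face of a $2$-cell embedding has at least three edges on its boundary, and each edge lies on the boundary of at most two faces; hence $3F\le 2E$, i.e.\ $F\le \tfrac{2}{3}E$.

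Substituting this into Euler's formula yields
\[
\chi(\Pi)\;=\;V-E+F\;\le\; V-E+\tfrac{2}{3}E\;=\;V-\tfrac{1}{3}E,
\]
which rearranges to $|E(G)|\le 3|V(G)|-3\chi(\Pi)$. Plugging in the two expressions for $\chi(\Pi)$ then gives $|E(G)|\le 3|V(G)|+6g(G)-6$ and $|E(G)|\le 3|V(G)|+3\tilde g(G)-3$, as required. The only delicate point is ensuring a $2$-cell embedding on a surface whose genus matches $g(G)$ or $\tilde g(G)$; this is standard and may be invoked directly from the source cited in the preliminaries (e.g.\ \cite{graphsurface}), so I do not expect it to pose a real obstacle.
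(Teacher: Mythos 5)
Your proof is correct and takes essentially the same route as the paper: fix an embedding realizing the genus, combine $3F\le 2E$ with Euler's formula, and rearrange to get $|E|\le 3|V|-3\chi(\Pi)$. The only differences are your (reasonable) extra care about connectivity and $2$-cell embeddings; one small quibble is that for $|V(G)|\le 2$ the stated inequality is not actually ``trivial'' but false as written (e.g.\ a single vertex gives $0\le -3$), though the paper's own proof silently suffers from the same degenerate-face issue.
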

\begin{proof}
We only prove the first inequality; 
the second can be proved analogously. Suppose
$G$ is embedded in an orientable surface of genus $g=g(\Pi)=g(G)$, 
with embedding
$\Pi$. Denote the number of faces of $G$ with respect to $\Pi$ by $f$, its number of edges by $e$ and its number of vertices by $v$. 
Every face in $\Pi$ has at least
$3$ edges and each edge appears in at most $2$ faces, so
$3f\le 2e$, and hence $f\le 2/3 e$. By the Euler formula (see above) we have: 
\begin{align*}
g & = 1-1/2\chi(\Pi)\\
& = 1-1/2(v-e+f).
\end{align*}
Hence 
\begin{align*}
2g & = -v +e - f + 2\\
& \geq e/3-v+2, 
\end{align*}
which implies $e\le 3v+6g-6$.
\end{proof}

%

\textbf{Distributed complexity.}
We consider the standard \textit{LOCAL} model of distributed
computing~\cite{Linial:1992:LDG:130563.130578, Peleg:2000:DCL:355459}, 
see also~\cite{local-survey} for a recent survey. 
A distributed system is modeled as a graph~$G$. 
At each vertex~$v\in V(G)$ there is an independent 
agent/host/processor with a unique identifier~$\mathit{id}(v)$. 
Initially, each agent has no knowledge about the network, 
but only knows its own identifier. Information about other agents 
can be obtained through message passing, i.e.,
 through repeated interactions with neighboring vertices, 
 which happens in synchronous communication rounds. 
 In each round the following operations are performed:
(1) Each vertex performs a local computation 
(based on information obtained in previous rounds).
(2) Each vertex~$v$ sends one message to
 each of its neighbors.
(3) Each vertex~$v$ receives one message from 
each of its neighbors. The \emph{distributed complexity} 
of the algorithm is defined as the number of
 communication rounds until all agents terminate. 
 We call a distributed algorithm~$r$-local, if its 
 output depends only on the~$r$-neighborhoods 
~$N^r[v]$ of its vertices.

\section{The Local MDS Approximation}\label{sec:local-approx}

Let us start by revisiting the MDS approximation algorithm for planar graphs
by Lenzen et al.~\cite{ds-planar}, see Algorithm~\ref{alg:lenzen}.
The algorithm works in two phases. In the first phase, 
it adds all vertices whose (open) neighborhood cannot 
be dominated by a small number of vertices to a set~$D$. 
It has been shown in~\cite{ds-planar} that the set~$D$ is small in planar graphs. 
In the second phase, it defines a dominator function~$dom$ which 
maps every vertex~$v$ that is not dominated yet by~$D$ to its dominator. 
The dominator~$dom(v)$ of~$v$ is chosen arbitrary among 
those vertices of~$N[v]$ which dominate the 
maximal number of vertices not dominated yet.

\begin{algorithm}[h!]
\caption{Dominating Set Approximation Algorithm for Planar Graphs}
\label{alg:lenzen}
\begin{algorithmic}[1]
\vspace{2mm}
\STATE Input: Planar graph~$G$
\STATE~$(*$ \emph{Phase 1} ~$*)$
\STATE~$D \gets \emptyset$
\STATE \textbf{for}~$v\in V$ (in parallel) \textbf{do}
\STATE ~~~~~\textbf{if} there does not exist a set~$A\subseteq V(G)\setminus \{v\}$ such that~$N(v)\subseteq N[A]$ and~$|A|\leq 6$ \textbf{then}
\STATE ~~~~~~~~~$D\gets D\cup \{v\}$
\STATE ~~~~~\textbf{end if}
\STATE \textbf{end for}
\STATE~$(*$ \emph{Phase 2} ~$*)$
\STATE~$D'\gets \emptyset$
\STATE \textbf{for}~$v\in V$ (in parallel) \textbf{do}
\STATE ~~~~~~$d_{V-D}(v)\gets |N[v]\setminus N[D]|$
\STATE ~~~~~ \textbf{if}~$v\in V\setminus N[D]$ \textbf{then}

\STATE ~~~~~~~~~$\Delta_{V-D}(v)\gets \max_{w\in N[v]}d_{V-D}(w)$
\STATE ~~~~~~~~ choose any~$dom(v)\in N[v]$ with~$d_{V-D}(dom(v))=\Delta_{V-D}(v)$
\STATE ~~~~~~~~~$D'\gets D'\cup \{dom(v)\}$
\STATE ~~~~~~~~ \textbf{end if}
\STATE ~~~~~ \textbf{end if}
\STATE \textbf{end for}
\STATE \textbf{return}~$D\cup D'$
\end{algorithmic}\label{alg:lenzen}
\end{algorithm}

We now propose the following small change to the algorithm. 
As additional input, we require an integer~$c\geq \epsilon(G)$ 
and we replace the condition~$|A|\leq 6$ in Line~5 by the 
condition~$|A|\leq 2c$. In the rest of this section, we show that 
the modified algorithm computes a constant factor approximation on 
any class of graphs of bounded genus. Note that the algorithm 
does not have to compute the edge density of~$G$, which is not 
possible in a local manner. Rather, we leverage Lemma~\ref{lem:dens} 
which upper bounds~$\epsilon(G)$ for any fixed 
class of bounded genus graphs: this upper bound
can be used as an input to the local algorithm. 

We first show that the set~$D$ computed in Phase~1 
of the algorithm is small. The following lemma is a straight-forward 
generalization of Lemma~6.3 of~\cite{ds-planar}, 
which in fact uses no topological arguments at all. 

\begin{lemma}\label{thm:largeneighbourhood}
  Let~$G$ be a graph and let~$M$ be 
  a minimum dominating set of~$G$. 
  Assume that for some constant~$c$ all \depthone
  minors~$H$ of~$G$ satisfy~$\epsilon(H)\leq c$. Let 
  \[ D:=\{v\in V(G) : \text{ there is no set~$A\subseteq V(G)\setminus\{v\}$
  such that~$N(v)\subseteq N[A]$ and~$|A|\leq 2c$}\}.\]
 Then~$|D|\leq (c+1)\cdot |M|$.
\end{lemma}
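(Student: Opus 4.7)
The plan is to use the edge-density hypothesis on a suitable depth-$1$ minor $H$ of $G$ whose vertex set is $M\cup(D\setminus M)$, and then to count edges of $H$ to derive $|D\setminus M|\le c|M|$; combined with the trivial $|D\cap M|\le|M|$ this gives the claimed $|D|\le(c+1)|M|$.

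First I would unpack the defining condition of $D$ in two ways. Choosing $A:=N(v)$ itself gives a subset of $V(G)\setminus\{v\}$ that trivially dominates $N(v)$, so the definition forces $|N(v)|\ge 2c+1$; every vertex of $D$ has degree at least $2c+1$ in $G$. When moreover $v\in D\setminus M$, the set $A:=M\cap N[N(v)]$ is contained in $V(G)\setminus\{v\}$ (because $v\notin M$) and dominates $N(v)$ (because $M$ is a dominating set of $G$), so $|M\cap N[N(v)]|\ge 2c+1$. In other words, every $v\in D\setminus M$ has at least $2c+1$ witnesses in $M$ within graph-distance~$2$.

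Next I would build $H$ by star-contracting the vertices outside $D\cup M$ towards $M$. Since $M$ is dominating, each $u\in V(G)\setminus(D\cup M)$ has an $M$-neighbour; pick one, $\phi(u)\in M\cap N(u)$. Declare the stars $G_m:=\{m\}\cup\phi^{-1}(m)$ for $m\in M$ and $G_v:=\{v\}$ for $v\in D\setminus M$. These are pairwise disjoint and their vertex sets partition $V(G)$, so $V(H)=M\cup(D\setminus M)$ has cardinality $|M|+|D\setminus M|$, and the hypothesis yields $|E(H)|\le c(|M|+|D\setminus M|)$. For the matching lower bound, every $v\in D\setminus M$ contributes to $E(H)$ the direct edges to each $m\in N(v)\cap M$ and, via the contraction, an edge $\{v,m\}$ to every $m\in M\setminus N(v)$ that is reached through some mediator $u\in N(v)\cap\phi^{-1}(m)$; combining these with the $2c+1$ distance-$2$ witnesses from the first step would give $|D\setminus M|\le c|M|$, hence the lemma.

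The principal technical obstacle is the lower bound in the previous paragraph. A witness $m\in M\cap N[N(v)]$ at distance exactly $2$ may fail to become a neighbour of $v$ in $H$ if every common neighbour $u\in N(v)\cap N(m)$ happens to lie in $D\cup M$, because those vertices are already consumed as centres of other stars and hence unavailable as mediators. I would address this either by allowing the stars $G_v$ for $v\in D\setminus M$ to absorb some leaves from $N(v)\setminus(D\cup M)$ as well, so that the edge $\{v,m\}$ can be realized from the $G_v$ side, or by arguing that any witness still missed must force a corresponding direct $M$-neighbour of $v$ (and hence another $H$-edge) to exist, so that $\sum_{v\in D\setminus M}\deg_H(v)$ remains large enough to close the density inequality. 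This routing bookkeeping, and not any topological property of an embedding, is the real content of the proof.
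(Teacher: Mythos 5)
Your construction is essentially the paper's: the paper first passes to a minimal edge set with respect to which $M$ still dominates, which is exactly your choice of one $M$-neighbour $\phi(u)$ for each $u\notin D\cup M$, and then contracts the resulting stars and counts edges against the density hypothesis. Two remarks on the counting itself: edges of $H$ internal to $D\setminus M$ are counted twice in $\sum_{v\in D\setminus M}\deg_H(v)$, so you must subtract $|E(H[D\setminus M])|\le c\,|D\setminus M|$ (a subgraph is in particular a \depthone minor); the inequality $(2c+1)|D\setminus M|\le 2c|D\setminus M|+c|M|$ still closes to $|D\setminus M|\le c|M|$. The genuine gap is the degree lower bound $\deg_H(v)\ge 2c+1$ for $v\in D\setminus M$, which you correctly flag as unresolved: realizing each of the $2c+1$ witnesses in $M\cap N[N(v)]$ as a distinct $H$-neighbour really does fail (several witnesses may be reachable only through a single mediator $u$, which is assigned to only one star $\phi(u)$, or only through mediators lying in $D\cup M$), and neither of your proposed repairs obviously restores an injection from witnesses to $H$-edges.

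The fix is much simpler than the routing bookkeeping you anticipate, and it is the actual content of the paper's proof: do not count witnesses in $M$ at all. Let $A\subseteq V(G)\setminus\{v\}$ be the $H$-neighbourhood of $v$ read back in $G$, that is, the centres $m$ of the stars $G_m$ adjacent to the singleton $\{v\}$ together with the vertices of $D\setminus M$ adjacent to $v$. Every $u\in N_G(v)$ lies in some part of your partition and is therefore dominated by the corresponding element of $A$: by $\phi(u)$ if $u\notin D\cup M$, and by itself if $u\in M$ or $u\in D\setminus M$. Hence $N(v)\subseteq N[A]$ with $v\notin A$, and the definition of $D$ immediately forces $\deg_H(v)=|A|\ge 2c+1$. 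With this one-line observation replacing your witness argument, your proof is complete and coincides with the paper's.
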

\begin{proof}
  Let~$H$ be the graph with~$V(H)=M\cup N[D\setminus M]$ and where
 ~$E(H)$ is a minimal subset of~$E(G[V(H)])$ such that all edges with
  at least one endpoint in~$D\setminus M$ are contained in~$E(H)$ and
  such that~$M$ is a dominating set in~$H$. By this minimality
  condition, every vertex~$v\in V(H)\setminus (D\cup M)$ has exactly
  one neighbor~$m\in M$, no two vertices of~$V(H)\setminus(M\cup D)$
  are adjacent, and no two vertices of~$M$ are adjacent.

  We construct a \depthone minor~$\tilde{H}$ of~$H$ by contracting the
  star subgraphs~$G_m$ induced by~$N_H[m]\setminus D$ for~$m\in
  M\setminus D$ to a single vertex
 ~$v_m$. Let~$w\in D\setminus M$. As~$N_G(w)$ cannot be covered by
  less than~$(2c+1)$ elements from~$V(G)\setminus \{w\}$ (by definition of~$D$),~$w$
  also has at least~$(2c+1)$ neighbors in~$\tilde{H}$.  On the other hand,
 ~$\tilde{H}$ has at most~$c\cdot |V(\tilde{H})|$ edges, and also
  the subgraph~$\tilde{H}[D\setminus M]$ has at most 
 ~$c\cdot |D\setminus M|$ edges (by assumption on
 ~$\epsilon(\tilde{H})$).

  Hence
  \begin{align*}
    & \quad  (2c+1)\cdot |D\setminus M|-c\cdot |D\setminus M| \\
    & \leq  \sum_{w\in D\setminus M} d_{\tilde{H}}(w)-|E(\tilde{H}[D\setminus M])|\\
    & \leq  |E(\tilde{H})|\\
    & \leq  c\cdot |V(\tilde{H})|\\
    & = c\cdot (|D\setminus M|+ |M|),
  \end{align*}
  and hence~$|D\setminus M|\leq c\cdot|M|$, which implies the claim.
\end{proof}

\begin{assumption}
For the rest of this section, we assume that~$G$ satisfies 
that for all \depthone
  minors~$H$ of~$G$, 
$\epsilon(H)\leq c$ for some constant~$c$,
and we fix~$M$ and~$D$ as in 
Lemma~\ref{thm:largeneighbourhood}.
We furthermore assume that for some $t\geq 3$, $G$ excludes~$K_{t,3}$ as \depthone minor. 
\end{assumption}

Let us write~$R$ for the set~$V(G)\setminus N[D]$ of 
vertices which are not dominated by~$D$. The algorithm 
defines a dominator function~$dom:R\rightarrow N[R]\subseteq 
V(G)\setminus D$. The set~$D'$ computed by the algorithm
is the image~$dom(R)$ of~$d$ under~$R$. As~$R$ contains 
the vertices which are not dominated by~$D$,~$D'\cup D$ is a 
dominating set of~$G$. This simple observation proves that the 
algorithm correctly computes a dominating set of~$G$. 
Our aim is to find a bound on~$|dom(R)|$.

We fix an ordering of~$M$ as~$m_1,\ldots, m_{|M|}$ such that 
the vertices of~$M\cap D$ are first in the ordering and 
inductively define a minimal set~$E'\subseteq E(G)$ such
 that~$M$ is a dominating set with respect to~$E'$. For this, we add all 
 edges~$\{m_1,v\}\in E(G)$ with~$v\in N(m_1)\setminus M$ to~$E'$. 
 We then continue inductively by adding for~$i>1$ all 
 edges~$\{m_i, v\}\in E(G)$ with~$v\in N(m_i)
 \setminus(M\cup N_{E'}(m_1,\ldots, m_{i-1}))$. 

For~$m\in M$, let~$G_m$ be the star subgraph of~$G$ with 
center~$m$ and all vertices~$v$ with~$\{m,v\}\in E'$. Let~$H$
be the \depthone minor of~$G$ which is obtained by contracting all 
stars~$G_m$ for~$m\in M$. This construction is visualized in 
Figure~\ref{fig:construction}. In the figure, solid lines
represent edges from~$E'$, lines from~$E(G)\setminus E'$ are 
dashed. We want to count the endpoints of directed edges, 
which represent the dominator function~$dom$. 

\begin{figure}[h!]
\centering

\begin{tikzpicture}

\filldraw[draw=black,fill=black!10!white,rounded corners=10pt] (1,1) -- (0.15, -0.9) -- (1.85,-0.9) -- cycle;

\draw[fill=black] (1, 0.5) circle (1mm);
\node at (1.5, 0.8) {$m_1$};
\node at (0.1, 0.1) {$G_{m_1}$};

\foreach \x in {0.5, 1, 1.5}
{
	\draw[fill=black] (\x, -0.7) circle (0.4mm);
	\draw (\x,-0.7) -- (1,0.5);
}


\filldraw[draw=black,fill=black!10!white,rounded corners=10pt] (3,1) -- (2.15, -0.9) -- (3.85,-0.9) -- cycle;

\draw[fill=black] (3, 0.5) circle (1mm);
\node at (3.5, 0.8) {$m_2$};

\foreach \x in {2.5, 3, 3.5}
{
	\draw[fill=black] (\x, -0.7) circle (0.4mm);
	\draw (\x,-0.7) -- (3,0.5);
}

\draw[fill=black] (4.25, -0.1) circle (0.4mm);
\draw[fill=black] (4.55, -0.1) circle (0.4mm);
\draw[fill=black] (4.85, -0.1) circle (0.4mm);

\draw[dashed] (3,0.5) -- (1.5, -0.7);


\filldraw[draw=black,fill=black!10!white,rounded corners=10pt] (6,1) -- (5.15, -0.9) -- (6.85,-0.9) -- cycle;

\draw[fill=black] (6, 0.5) circle (1mm);
\node at (6.8, 0.7) {$m_{|M|}$};

\foreach \x in {5.5, 6, 6.5}
{
	\draw[fill=black] (\x, -0.7) circle (0.4mm);
	\draw (\x,-0.7) -- (6,0.5);
}

\draw (2.5,-0.7) edge[out=210,in=330,->] (1.58, -0.74);
\draw (5.45,-0.74) edge[out=210,in=330,<-] (3.5, -0.7);
\draw (6.4,-0.74) edge[out=210,in=330,<-] (6, -0.7);

\end{tikzpicture}
\caption{The graphs $G_m$. Solid lines represent edges from $E'$, directed edges represent the dominator function $d$.}
\label{fig:construction}
\end{figure}
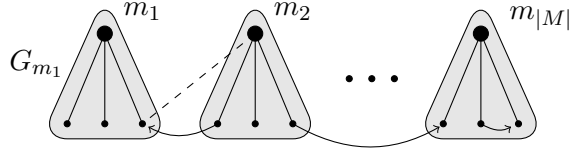

In the following, we call a directed edge which
represents the function~$dom$ a \emph{$dom$-edge}. We did not 
draw~$dom$-edges that either start or end in~$M$. When counting~$|dom(R)|$, 
we may simply add a term~$2|M|$ to estimate the number of those edges. 
We also did not draw a~$dom$-edge starting in~$G_{m_1}$. 
In the figure, we assume that the vertex~$m_1$ belongs to 
$M\cap D$. Hence every vertex~$v$ from~$N[m_1]$ is 
dominated by a vertex from~$D$ and the function is thus not 
defined on~$v$.

$H$ has~$|M|$ vertices and by our 
assumption on the 
density of \depthone minors of~$G$, 
it has at most~$c|M|$ edges.

Our analysis proceeds as follows. We distinguish between
two types of~$dom$-edges, namely those which go from 
one star to another star and those which start and end in the 
same star. By the star contraction, all edges which go 
from one star to another star are represented by a 
single edge in~$H$. We show in Lemma~\ref{lem:edgerepresentative} 
that each edge in~$H$ does not represent many such~$dom$-edges 
with distinct endpoints. As~$H$ has at most~$c|M|$ 
edges, we will end up with a number of such edges 
that is linear in~$|M|$. On the other hand, all edges 
which start and end in the same star completely 
disappear in~$H$. In Lemma~\ref{lem:insidestars}
we show that these star contractions ``absorb'' only few 
such edges with distinct endpoints.

We first show that an edge in~$H$ represents only 
few~$dom$-edges with distinct endpoints. For each
$m\in M\setminus D$, we fix a set~$C_m\subseteq V(G)\setminus \{m\}$ 
of size at most~$2c$ which covers~$N_{E'}(m)$. Recall that we 
assume that~$G$ excludes~$K_{t,3}$ as \depthone minor. 

\begin{lemma}\label{lem:edgerepresentative}
Let~$1\leq i<j\leq |M|$. Let ~$N_i:=N_{E'}(m_i)$ and~$N_j:=N_{E'}(m_j)$. 
\begin{enumerate}
\item If~$m_i\in D$ and~$m_j\not\in D$, then \[|\{u \in N_j: \text{ there is~$v\in N_i$ with~$\{u,v\}\in E(G)\}|\leq 2ct$}.\]
\item If~$m_i\not \in D$ (and hence~$m_j\not\in D$), then \[|\{u \in N_j: \text{ there is~$v\in N_i$ with~$\{u,v\}\in E(G)\}|\leq 2ct$}.\]
\item If~$m_i\not\in D$ (and hence~$m_j\not\in D$), then \[|\{u \in N_i: \text{ there is~$v\in N_j$ with~$\{u,v\}\in E(G)\}|\leq 4ct$}.\]
\end{enumerate}
\end{lemma}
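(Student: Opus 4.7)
The plan for all three parts is the same: assume the count exceeds the claimed bound and derive a contradiction by exhibiting $K_{3,t}$ as a \depthone minor of $G$, violating the standing assumption that $G$ excludes $K_{3,t}$ as a \depthone minor. The common engine is the small cover set $C_{m_j}$ (respectively $C_{m_i}$) of size at most $2c$ together with a pigeonhole argument. Two structural facts will be used repeatedly: since $i<j$ and $E'$ is built inductively, $N_i\cap N_j=\emptyset$; and since $E'$ only contains edges from some $m\in M$ to neighbors outside $M$, every vertex of $N_i\cup N_j$ lies outside $M$.

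For part~(1), let $S=\{u\in N_j : u \text{ has a neighbor in } N_i\}$ and suppose $|S|>2ct$. Since $C_{m_j}$ covers $N_j\supseteq S$ and $|C_{m_j}|\le 2c$, pigeonhole yields some $c^*\in C_{m_j}$ with $|N[c^*]\cap S|>t$; discarding $c^*$ itself if necessary, we obtain $t$ distinct vertices $u_1,\dots,u_t\in S$ each adjacent to $c^*$. For each $l$, fix a witness $v_l\in N_i$ adjacent to $u_l$. I would then exhibit a $K_{3,t}$ \depthone minor with branch sets: the star centered at $m_i$ with leaves $\{v_1,\dots,v_t\}\subseteq N_i$; the singleton $\{m_j\}$; the singleton $\{c^*\}$; and the $t$ singletons $\{u_l\}$. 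Each $u_l$ is adjacent to the first branch set via $v_l$, to the second because $u_l\in N_j\subseteq N(m_j)$, and to the third by choice of $c^*$. The adjacencies are immediate; the real work is \emph{disjointness}: $m_i\neq m_j$; $v_l,u_l\notin M$ (so they miss $\{m_i\},\{m_j\}$); $u_l\neq v_{l'}$ because $N_i\cap N_j=\emptyset$; and $c^*\neq m_j$ by definition of $C_{m_j}$. The remaining potential collisions ($c^*=m_i$, $c^*\in N_i$, or $c^*=u_l$) can be absorbed by discarding at most one additional $u_l$ or by choosing a different $v_l$, which is precisely the slack the ``$2$'' in the factor $2c$ is buying.

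Part~(2) is handled by the same argument verbatim: the hypothesis $m_i\notin D$ does not change the construction, since everything we needed about $m_i$ (only that $m_i\in M$ and $i<j$) still holds.

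Part~(3) requires swapping the roles: now we bound $|S'|=|\{u\in N_i : u\text{ has neighbor in }N_j\}|$ and the bound is $4ct$ rather than $2ct$. The natural adaptation uses the cover $C_{m_i}$ of size at most $2c$ and an analogous construction, placing the $v_l\in N_j$ as leaves of a star centered at $m_j$, with singletons $\{m_i\}$, $\{c^*\}$, and the $\{u_l\}$. Starting from $|S'|>4ct$, pigeonhole provides a $c^*\in C_{m_i}$ with $|N[c^*]\cap S'|>2t$; the doubled slack is exactly what is needed to absorb the richer collection of possible collisions here ($c^*$ coinciding with $m_j$, lying in $N_j$, or equalling some chosen $v_l$), while still retaining $t$ good singletons $u_l$.

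The main obstacle is the bookkeeping for vertex-disjointness of the branch sets, since the hub $c^*$ is an essentially arbitrary vertex from the cover and can in principle collide with the other vertices designated to be branch sets. The right way to present the proof is to set up the slack from $|C_m|\le 2c$ (and the factor $4$ in part~(3)) explicitly as the budget for these collisions, rather than to treat each collision case separately.
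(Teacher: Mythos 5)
Your overall strategy---cover sets of size at most $2c$, a pigeonhole step, and a contradiction obtained by exhibiting $K_{3,t}$ as a depth-$1$ minor whose three hubs are the contracted star at one center, the other center, and a cover element---is exactly the paper's, and your parts (1) and (2) are essentially the paper's argument in contrapositive form. One correction there: the collision $c^*=m_i$ cannot be ``absorbed by discarding one additional $u_l$''. If the hub $c^*$ coincides with the star center, two of your three branch sets merge and no surplus of $u_l$'s manufactures a third hub. The reason this collision is harmless is different: by the inductive construction of $E'$, every common $G$-neighbor of $m_i$ and $m_j$ is assigned to some $N_{E'}(m_k)$ with $k\le i<j$, so $m_i$ has \emph{no} neighbors in $N_j$ at all and can never be the pigeonhole winner; this is what the paper records as ``$m_i\notin C_{m_j}$''.

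The genuine gap is in part (3). There the cover is $C_{m_i}$, and the problematic element is $c^*=m_j$, which \emph{can} occur and can cover arbitrarily many vertices of $N_i$ (edges from $m_j$ into $N_i$ exist in $G$; they are merely absent from $E'$ because $i<j$). In that case your three intended hubs degenerate to two ($\{m_i\}$ and the $m_j$-star), and the ``doubled slack'' $|N[c^*]\cap S'|>2t$ does not repair this: extra $u_l$'s do not create a third hub, and a second pigeonhole over $C_{m_j}$ applied to only $2t$ vertices yields some $c'$ covering more than $t/c$ of the witnesses, not $t$ of them. The paper's constant $4ct$ actually decomposes as $(2c-1)t+2ct$: each $c\in C_{m_i}\setminus\{m_j\}$ accounts for at most $t$ vertices exactly as in parts (1)--(2), while the vertices of $N_i$ adjacent to $m_j$ are handled by a \emph{second-level} cover---their witnesses lie in $N_j\subseteq N[C_{m_j}]$, and for each $c'\in C_{m_j}$ the third hub is the star centered at $c'$ with those witnesses as leaves, contributing at most $t$ per $c'$ and hence at most $2ct$ for this case. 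Some such two-level argument is needed; per-element slack does not suffice.
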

\begin{proof}
By definition of~$E'$,~$m_i\not\in C_{m_2}$. 
Choose any~$c\in C_{m_j}$. There are at most~$t-1$ 
distinct vertices~$u_1,\ldots, u_{t-1}\in (N_j\cap N(c))$ 
such that there are~$v_1,\ldots, v_{t-1}\in N_i$ 
(possibly not distinct) with~$\{u_k, v_k\}\in E(G)$ for
all~$k$,~$1\leq k\leq t-1$. 
Otherwise, we can contract the star with center~$m_i$
and branch vertices~$N(m_i)\setminus \{c\}$ and
thereby find~$K_{t,3}$ as  \depthone minor, a contradiction. 
See Figure~\ref{fig:contraction} for an illustration in the
case of an excluded~$K_{3,3}$. 
Possibly,~$c\in N_j$ and
 it is connected to a vertex of~$N_i$, hence we have 
at most~$t$ vertices in~$N_j\cap N[c]$ with a connection to 
$N_i$. As~$|C_{m_2}|\leq 2c$, we conclude both the first and the second item.
 
Regarding the third item, choosing any~$c\in C_{m_i}$. If~$c\neq m_j$, 
we conclude just as above, that there are at most~$t-1$ distinct 
vertices~$u_1,\ldots, u_{t-1}\in (N_i\cap N(c))$ such that there 
are~$v_1,\ldots, v_{t-1}\in N_j$ (possibly not distinct) with 
$\{u_k, v_k\}\in E(G)$ for all~$k$,~$1\leq k\leq t-1$ and 
hence at most~$t$ vertices in~$N_i\cap N[c]$ with a 
connection to~$N_j$. Now assume~$c=m_j$. Let~$c'\in C_{m_j}$. 
There are at most~$t-1$ distinct vertices~$u_1,
\ldots, u_{t-1}\in (N_i\cap N_E(m_j))$ such that there 
are vertices~$v_1,\ldots, v_{t-1}\in N_j\cap N(c)$ (possibly not distinct) 
with~$\{u_k, v_k\}\in E(G)$ for all~$k$,~$1\leq k\leq t-1$. 
Again, considering the possibility that~$c'\in N_i$, there 
are at most~$t$ vertices in~$N_i\cap N_E(m_j)$ with a 
connection to~$N_j\cap N(c)$. As~$|C_{m_2}|\leq 2c$, 
we conclude that in total there are at most~$2ct$ vertices 
in~$N_i\cap N_E(m_j)$ with a connection to~$N_j$. 
In total, there are hence at most~$(2c-1)t + 2ct\leq 4ct$ 
vertices of the described form.  
\begin{figure}[h!]
\centering

\begin{tikzpicture}

\begin{scope}[xshift=4cm]
\begin{scope}[yscale=1,xscale=-1]
\draw[fill=black] (1, 0.5) circle (1mm);

\draw[fill=black] (1, -1.7) circle (1mm);
\node at (1.7, -2) {$c_1\in C_1$};

\foreach \x in {0.5, 1, 1.5}
{
	\draw[fill=black] (\x, -0.7) circle (0.4mm);
	\draw (\x,-0.7) -- (1,-1.7);
	\draw (\x,-0.7) -- (1,0.5);
}
\node at (0.3, -0.6) {u};
\node at (0.8, -0.6) {v};
\node at (1.7, -0.6) {w};

\filldraw[draw=black,fill=black!10!white,rounded corners=10pt] (2.4,1) rectangle (3.8,-1.2);

\draw[fill=black] (3, 0.5) circle (1mm);
\node at (1.3, 0.7) {$m_2$};
\node at (3.3, 0.7) {$m_1$};

\foreach \x in {2.7, 3.3}
{
	\draw[fill=black] (\x, -0.7) circle (0.4mm);
	\draw (\x,-0.7) -- (3,0.5);
}

\draw (2.7,-0.7) edge[out=210,in=330,-] (0.5, -0.7);
\draw (2.7,-0.7) edge[out=210,in=330,-] (1, -0.7);
\draw (3.3,-0.7) edge[out=210,in=330,-] (1.5, -0.7);

\end{scope}
\end{scope}

\node at (5, -0.7) {$\Longrightarrow$};

\begin{scope}[xshift=7cm]

\draw[fill=black] (0, 0.5) circle (1mm);
\node at (0.3, 0.7) {$m_2$};

\filldraw[draw=black,fill=black!10!white,rounded corners=5pt] (0.6,1) rectangle (1.6,0.2);
\draw[fill=black] (1, 0.5) circle (1mm);
\node at (1.3, 0.7) {$m_1$};

\draw[fill=black] (2, 0.5) circle (1mm);
\node at (2.3, 0.7) {$c_1$};

\draw[fill=black] (0, -1.5) circle (1mm);
\node at (0.3, -1.7) {$u$};

\draw[fill=black] (1, -1.5) circle (1mm);
\node at (1.3, -1.7) {$v$};

\draw[fill=black] (2, -1.5) circle (1mm);
\node at (2.3, -1.7) {$w$};

\foreach \x in {0, 1, 2}
{
	\foreach \y in {0,1,2}
	{
		\draw (\x,0.5) -- (\y, -1.5);
	}
	
}

\end{scope}

\end{tikzpicture}
\caption{Visualisation of the proof of Lemma~\ref{lem:edgerepresentative} in the case of excluded $K_{3,3}$}
\label{fig:contraction}
\end{figure}
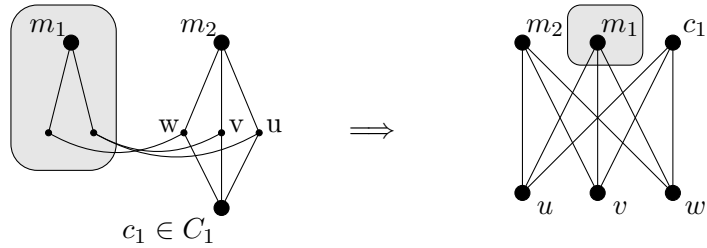
\end{proof}


We write~$Y$ for the set of all vertices~$\{u\in N_{E'}(m_i) :$ 
$m_i\not\in D$ and there is~$v\in N_{E'}(m_j)$,~$j\neq i$ 
and~$\{u,v\}\in E(G)\}$. 

\begin{corollary}\label{crl:numedgesbetweendiamonds}
$|Y|\leq 6c^2t|M|$. 
\end{corollary}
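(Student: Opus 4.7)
The plan is to bound $|Y|$ by summing contributions over the edges of the contracted graph $H$, using Lemma~\ref{lem:edgerepresentative} to control how much each edge of $H$ contributes, and then using the density bound $|E(H)|\le c|M|$ to finish.

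First I would observe that every vertex $u\in Y$ is witnessed by a pair of distinct indices $i\neq j$ with $m_i\notin D$, $u\in N_{E'}(m_i)$, and some neighbor $v\in N_{E'}(m_j)$ of $u$ in $G$; in particular, the pair $\{m_i,m_j\}$ is then an edge of $H$ by definition of the star contraction. So it suffices to show that each edge of $H$ accounts for at most $6ct$ elements of $Y$, and then apply $|E(H)|\le c\cdot |V(H)| = c|M|$.

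Next I would fix an edge $\{m_i,m_j\}\in E(H)$ with $i<j$, and split into cases according to the ordering convention (vertices of $M\cap D$ come first). If both $m_i,m_j\in D$, this edge contributes nothing to $Y$. If $m_i\in D$ and $m_j\notin D$, then only vertices in $N_{E'}(m_j)$ can contribute, and item~(1) of Lemma~\ref{lem:edgerepresentative} caps this at $2ct$. If $m_i\notin D$ (hence $m_j\notin D$ by the ordering), item~(3) bounds by $4ct$ the number of vertices of $N_{E'}(m_i)$ having a neighbor in $N_{E'}(m_j)$, and item~(2) bounds by $2ct$ the number of vertices of $N_{E'}(m_j)$ having a neighbor in $N_{E'}(m_i)$. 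In every case the contribution is at most $4ct+2ct = 6ct$.

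Finally I would sum: since $Y$ is a set, summing the per-edge bounds yields
\[
|Y| \;\le\; \sum_{\{m_i,m_j\}\in E(H)} 6ct \;\le\; 6ct\cdot |E(H)| \;\le\; 6ct\cdot c|M| \;=\; 6c^2 t|M|,
\]
using that $H$ is a \depthone minor of $G$ and hence has edge density at most $c$ by the standing assumption. The main (small) obstacle is keeping the bookkeeping clean between the three cases of Lemma~\ref{lem:edgerepresentative} and ensuring the ordering convention is used correctly so that the asymmetric bounds $2ct$ vs.\ $4ct$ are applied on the right side of each edge; once that is done, the density bound immediately delivers the claimed estimate.
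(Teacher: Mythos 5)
Your proof is correct and follows essentially the same route as the paper: charge each vertex of $Y$ to an edge of the contracted graph $H$, use Lemma~\ref{lem:edgerepresentative} to bound the contribution of each edge by $2ct+4ct=6ct$, and conclude with the density bound $|E(H)|\le c|M|$. Your case analysis is in fact slightly more careful than the paper's own one-line argument in matching the three items of the lemma to the correct sides of each edge.
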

\begin{proof}
Each of the~$c|M|$ many edges in~$H$ represents edges 
between~$N_i$ and~$N_j$, where~$N_i$ and~$N_j$ are defined 
as above. By the previous lemma, if~$i<j$, there are at 
most~$2ct$ vertices in~$N_i\cap Y$ and at most~$4ct$ 
vertices in~$N_j\cap Y$, hence in total, each edge accounts 
for at most~$6ct$ vertices in~$Y$. 
\end{proof}

We continue to count the edges which are inside the stars. 

\begin{lemma}\label{lem:edgestosamestar}
Let~$m\in M\setminus D$ and let~$v\in N_{E'}(m)\setminus C_m$. Then \[|\{u \in N_{E'}(m) : \{u,v\}\in E(G)\}|\leq 2c(t-1).\] 
\end{lemma}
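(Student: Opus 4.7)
My plan is to mirror the $K_{t,3}$-free minor argument of Lemma~\ref{lem:edgerepresentative}, but now applied inside the single star $G_m$ rather than between two distinct stars. Let $S := \{u \in N_{E'}(m) : \{u,v\}\in E(G)\}$ denote the set whose cardinality I want to bound.

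First, I would invoke the definition of $D$: since $m \in M \setminus D$, there exists a covering set $C_m \subseteq V(G) \setminus \{m\}$ of size at most $2c$ with $N(m) \subseteq N[C_m]$. Because $S \subseteq N_{E'}(m) \subseteq N(m)$, every vertex of $S$ either lies in $C_m$ itself or has a neighbor in $C_m$, so my task reduces to bounding, for each $w \in C_m$, the number of $u \in S$ that are adjacent to $w$.

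The core claim is that for each $w \in C_m$ there are at most $t-1$ vertices $u \in S \cap N(w)$. To prove it, I would suppose for contradiction that $u_1,\dots,u_t$ are $t$ distinct such witnesses. The hypothesis $v \in N_{E'}(m) \setminus C_m$, combined with $m \notin C_m$ (by the definition $C_m \subseteq V(G)\setminus\{m\}$) and $w \in C_m$, ensures that $m,v,w$ are pairwise distinct; each $u_i$ is moreover distinct from $m,v,w$ because $G$ is simple and the $u_i$'s are common neighbors of all three. Each $u_i$ is adjacent to $m$ (since $u_i\in N_{E'}(m)\subseteq N(m)$), to $v$ (since $u_i\in S$), and to $w$ (by choice of $u_i\in N(w)$). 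Taking the singleton branch sets $\{m\},\{v\},\{w\}$ on one side and $\{u_1\},\dots,\{u_t\}$ on the other then exhibits $K_{3,t}$ as a subgraph, and hence as a depth-$1$ minor of $G$, contradicting the standing assumption that $G$ excludes $K_{t,3}$ as a depth-$1$ minor.

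Summing this local bound over the $|C_m|\leq 2c$ elements of $C_m$ yields $|S|\leq 2c(t-1)$, as required. The main obstacle is ensuring that the three branch vertices $m,v,w$ used on the left side of the prospective $K_{3,t}$ are pairwise distinct and disjoint from the $u_i$'s; this is exactly where the hypothesis $v\in N_{E'}(m)\setminus C_m$ is essential, and it is the only step that substantively differs from the inter-star analysis of Lemma~\ref{lem:edgerepresentative}.
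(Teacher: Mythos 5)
Your proof is correct and follows essentially the same route as the paper's: for each $w\in C_m$ the paper likewise bounds the number of $u\in N_{E'}(m)\cap N(w)$ adjacent to $v$ by $t-1$ via the excluded $K_{t,3}$ depth-$1$ minor (here in fact a $K_{3,t}$ subgraph on $m,v,w$ versus the $u_i$'s) and then multiplies by $|C_m|\leq 2c$. Your version is, if anything, more explicit than the paper's one-line appeal to ``the same argument as in Lemma~\ref{lem:edgerepresentative}'', since you spell out the pairwise distinctness of $m$, $v$, $w$ and the $u_i$'s.
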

\begin{proof}
Let~$c\in C_m$. By the same argument as in 
Lemma~\ref{lem:edgerepresentative}, there are 
at most~$t-1$ distinct vertices~$u_1,\ldots, u_{t-1}\in 
(N_{E'}(m)\cap N(c))$ such that~$\{u_k, v\}\in E(G)$ 
for all~$k$,~$1\leq k\leq t-1$. 
\end{proof}

Let~$C:=\bigcup_{m\in M\setminus D}C_m$. There are 
only few vertices which are highly connected to~$M\cup C$. 
Let~$Z:=\{u \in N_{E'}(M\setminus D) : |N(u)\cap (M\cup C)|>4c\}$. 

\begin{lemma}\label{lem:Z}
\[|Z|< |M\cup C|.\]
\end{lemma}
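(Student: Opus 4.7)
The plan is to bound $|Z|$ by an edge-counting argument on the induced subgraph $H := G[Z \cup M \cup C]$, using the density assumption on depth-$1$ minors of $G$ (subgraphs are depth-$1$ minors via trivial one-vertex stars) as an upper bound, and the large degrees of vertices of $Z$ into $M \cup C$ as a lower bound.

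First I would observe that $Z$ is disjoint from $M$: by construction of $E'$, every edge in $E'$ has exactly one endpoint in $M$ and one in $V(G)\setminus M$, so $N_{E'}(M\setminus D) \cap M = \emptyset$ and hence $Z \cap M = \emptyset$. This means $|V(H)| = |Z \cup M \cup C| \leq |Z| + |M\cup C|$, with only a possible overlap between $Z$ and $C$ to worry about in the lower-bound count.

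Next, for the upper bound, $H$ is a subgraph of $G$, hence a \depthone minor of $G$ (take each vertex to be its own trivial star). By the assumption on \depthone minors of $G$,
\[
|E(H)| \leq c\cdot |V(H)| \leq c\,(|Z| + |M\cup C|).
\]
For the lower bound, every $u\in Z$ has $|N(u)\cap (M\cup C)| > 4c$, and all these neighbors lie in $V(H)$, so all these edges are in $H$. Summing the contributions, the quantity $\sum_{u\in Z} |N(u)\cap (M\cup C)|$ counts each edge of $H$ at most twice (this can only happen when both endpoints lie in $Z\cap C$, which is the only possible overlap given $Z\cap M = \emptyset$). Therefore
\[
|E(H)| \;\geq\; \tfrac{1}{2}\sum_{u\in Z} |N(u)\cap (M\cup C)| \;>\; 2c\,|Z|.
\]

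Combining the two bounds gives $2c|Z| < c(|Z| + |M\cup C|)$, i.e.\ $|Z| < |M\cup C|$, as claimed. There is no real obstacle here beyond being careful that the overlap $Z\cap C$ is handled correctly in the double counting, which the factor of $2$ absorbs; the strict inequality $>4c$ in the definition of $Z$ is exactly what makes the lower and upper bounds incompatible once $|Z|$ reaches $|M\cup C|$.
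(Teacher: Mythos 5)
Your proof is correct and is essentially the paper's argument: the same edge count on $G[Z\cup M\cup C]$, bounded above by $c\cdot|V|$ via the depth-$1$ minor density assumption and below by $\tfrac12\cdot 4c|Z|$ from the degree condition defining $Z$. The only difference is that you run the inequality directly ($2c|Z|<c(|Z|+|M\cup C|)$) where the paper assumes $|Z|\ge|M\cup C|$ and derives a density contradiction; your handling of the double-counting over the possible overlap $Z\cap C$ is also the correct justification for the factor $\tfrac12$ that the paper uses implicitly.
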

\begin{proof}
Assume that~$|Z|\geq |M\cup C|$. Then the subgraph 
induced by~$Z\cup M\cup C$ has more than 
$\frac{1}{2}4c|Z|$ edges and~$|Z\cup M\cup C|$ 
vertices. Hence its edge density is larger than 
$2c|Z|/(|Z\cup M\cup C|)\geq 2c|Z|/(2|Z|)= c$, 
contradicting our assumption on the edge density of 
\depthone minors of~$G$ (which includes its subgraphs). 
\end{proof}

Finally, we consider the image of the~$dom$-function inside the stars. 

\begin{lemma}\label{lem:insidestars}
\[|\bigcup_{m\in M\setminus D}\{u\in N_{E'}(m) : 
dom(u)\in (N_{E'}(m)\setminus (Y\cup Z))\}|\leq 
(2(t-1)+4)c|M|.\]
\end{lemma}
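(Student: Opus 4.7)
The plan is to bound $\sum_{m\in M\setminus D}|A_m|$, where $A_m := \{u\in N_{E'}(m) : dom(u)\in N_{E'}(m)\setminus (Y\cup Z)\}$. By the inductive construction of $E'$, the leaf sets $\{N_{E'}(m) : m\in M\}$ are pairwise disjoint (each non-$M$ vertex is added to the star of at most one $m_i$), so the union in the claim has size exactly $\sum_{m\in M\setminus D}|A_m|$. For each $m\in M\setminus D$, I split $A_m$ according to whether the dom-target $v = dom(u)$ lies in $C_m$ or in $N_{E'}(m)\setminus C_m$, and handle the two parts separately.

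For the part with $v\in N_{E'}(m)\setminus C_m$, Lemma~\ref{lem:edgestosamestar} gives $|N[v]\cap N_{E'}(m)|\le 2c(t-1)+1$. Since $dom$ is a function, the preimages $dom^{-1}(v)\cap N_{E'}(m)$ are disjoint across distinct $v$ and each is contained in $N[v]\cap N_{E'}(m)$, so a careful accounting (using the \depthone minor density bound on the subgraph induced by the stars to control how many distinct $v$ actually appear in the image) yields a total contribution of at most $2(t-1)c|M|$ over all stars. For the part with $v\in C_m$, there are only $|C_m|\le 2c$ possible targets per star, but Lemma~\ref{lem:edgestosamestar} does not apply to cover vertices. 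Here the constraint $v\notin Z$ forces $|N(v)\cap(M\cup C)|\le 4c$; combining this with a $K_{3,t}$-exclusion argument applied to the triple $(m,v,c')$ for a suitable $c'\in C_m\setminus\{v\}$ (in the spirit of the proof of Lemma~\ref{lem:edgestosamestar}) bounds the preimage of each cover vertex, contributing the remaining $4c|M|$. Summing the two pieces gives $(2(t-1)+4)c|M|$, as required.

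The principal obstacle is the $v\in C_m$ case, because cover vertices escape the scope of Lemma~\ref{lem:edgestosamestar}. Obtaining the clean constant $4c$, rather than an $O(ct)$ bound, requires carefully selecting the third vertex in the $K_{3,t}$-contraction so that the $Z$-exclusion—not the cardinality of $C_m$—drives the estimate; enumerating naively over all of $C_m$ would inflate the dependence on $t$. The accounting for the $v\notin C_m$ part is also subtler than a bare application of Lemma~\ref{lem:edgestosamestar}: one must ensure the preimage sum telescopes across stars to give $2(t-1)c|M|$ rather than scaling with the sizes of individual leaf neighborhoods.
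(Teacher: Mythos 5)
There is a genuine gap: your argument never uses the defining property of the dominator function, namely that $dom(u)$ is chosen among $N[u]$ with \emph{maximal} residual degree. That preference is the linchpin of the paper's proof, and without it the lemma is simply false for an arbitrary function $dom:R\to N[R]$ with $dom(u)\in N[u]$. Concretely: take a single star $G_m$ with center $m$ and $k$ leaves, plus a perfect matching on the leaves, and let each leaf name its matched partner as dominator. Then every leaf lies in the set you are trying to bound, and $k$ is unbounded while $|M|=1$. Your two bookkeeping devices cannot repair this. For targets $v\notin C_m$, Lemma~\ref{lem:edgestosamestar} does bound each fiber $dom^{-1}(v)\cap N_{E'}(m)$ by $2c(t-1)+1$, but the number of distinct targets appearing in the image inside one star is not controlled by any density argument: each $u$ with $dom(u)\neq u$ contributes one edge of $G[N_{E'}(m)]$, so density only gives a bound of order $c\cdot|N_{E'}(m)|$ per star, which summed over $m$ is $O(c\cdot n)$, not $O(c\,t\,|M|)$. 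The ``telescoping'' you allude to has no mechanism behind it.

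The paper's proof goes the other way around. Fix $u$ with $v=dom(u)\in N_{E'}(m)\setminus(Y\cup Z)$. Since $v\notin Y$, all of $v$'s neighbors outside its own star lie in $M\cup C$ (up to the exceptions handled there); since $v\notin Z$, there are at most $4c$ of those; and by Lemma~\ref{lem:edgestosamestar} at most $2c(t-1)$ neighbors of $v$ lie inside the star. So $d(v)\leq 4c+2c(t-1)$. Now the greedy rule kicks in: $u$ preferred $v$ over $m\in N[u]$, so $m$'s residual degree is at most $d(v)$, hence the \emph{entire star} $N_{E'}(m)$ has at most $(2(t-1)+4)c$ vertices, and a fortiori so does the set of such $u$ in that star. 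Summing over at most $|M|$ stars gives the bound. Note also that the clean constant comes with no case split on $dom(u)\in C_m$; your separate treatment of cover vertices is unnecessary once the degree of $dom(u)$ is bounded directly via the $Y$ and $Z$ exclusions. If you want to keep a fiber-counting style, you would still need to first invoke the preference argument to cap $|N_{E'}(m)|$, at which point the fiber counting adds nothing.
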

\begin{proof}
Fix some~$m\in M\setminus D$ and some
$u\in N_{E'}(m)$ with~$dom(u)\in N_{E'}(m)\setminus
(Y\cup Z)$. Because~$dom(u)\not\in Y$,~$d(u)$ is not
connected to a vertex of a different star, except possibly 
for vertices from~$M$. Because~$dom(u)\not\in Z$, it is 
however connected to at most~$4c$ vertices from~$M\cup C$. 
Hence it is connected to at most~$4c$ vertices 
from different stars. By Lemma~\ref{lem:edgestosamestar}, 
$dom(u)$ is connected to at most~$2c(t-1)$ vertices from
the same star. Hence the degree of~$dom(u)$ is at most 
$4c+2c(t-1)$. Because~$u$ preferred to choose~$dom(u)\in
N_{E'}(m)$ over~$m$ as its dominator, we conclude that~$m$ 
has at most~$4c+2c(t-1)$~$E'$-neighbors. Hence, in total there can 
be at most~$(2(t-1)+4)c|M|$ such vertices. 
\end{proof}

We are now ready to put together the numbers. 

\begin{lemma}\label{lem:mainlemma}
If~$G$ has edge density at most~$c$ and excludes~$K_{t,3}$ 
as \depthone minor, then the modified algorithm computes 
a~$6c^2t+(2t+5)c+4$ approximation for the 
minimum dominating set problem on~$G$. 
\end{lemma}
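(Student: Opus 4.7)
The goal is to bound $|D\cup D'|\leq |D|+|D'|$ by a constant multiple of $|M|=\gamma(G)$. Lemma~\ref{thm:largeneighbourhood} already delivers $|D|\leq (c+1)|M|$, so the entire effort goes into bounding $|D'|=|dom(R)|$. The natural strategy is to classify each dominator vertex $u\in D'$ according to its location in the star structure $\{G_m : m\in M\}$ built above, then invoke the lemmas already proved term by term.

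Concretely, I would split $D'$ into the four (not necessarily disjoint, but whose union contains $D'$) pieces $D'\cap M$, $D'\cap Y$, $D'\cap (Z\setminus Y)$ and $D'\setminus (M\cup Y\cup Z)$, and bound each piece separately:
\begin{enumerate}
\item $|D'\cap M|\leq |M|$ is immediate.
\item $|D'\cap Y|\leq |Y|\leq 6c^2t|M|$ by Corollary~\ref{crl:numedgesbetweendiamonds}.
\item $|D'\cap (Z\setminus Y)|\leq |Z|<|M\cup C|\leq (2c+1)|M|$ by Lemma~\ref{lem:Z}, using $|C_m|\leq 2c$ for every $m\in M\setminus D$.
\item $|D'\setminus(M\cup Y\cup Z)|\leq (2(t-1)+4)c|M|=(2t+2)c|M|$ by Lemma~\ref{lem:insidestars}, once one shows that every such dominator $u$ must lie in the same star as the vertex $v\in R$ with $dom(v)=u$.
\end{enumerate}

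Summing these bounds with $|D|\leq (c+1)|M|$ yields
\[
|D\cup D'|\leq \bigl((c+1)+1+6c^2t+(2c+1)+(2t+2)c\bigr)|M|=\bigl(6c^2t+(2t+5)c+4\bigr)|M|,
\]
proving the approximation ratio. Correctness (that $D\cup D'$ dominates $G$) follows immediately from the definition of the algorithm: vertices in $N[D]$ are dominated by $D$, and every $v\in R=V(G)\setminus N[D]$ is dominated by $dom(v)\in D'$.

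The main obstacle, and the step I would write out most carefully, is item~4: namely the claim that if $u=dom(v)\in D'$ and $u\notin M\cup Y\cup Z$, then $u\in N_{E'}(m)$ for the same $m\in M\setminus D$ that hosts $v$. For this, note that $v\in R$ forces $v$'s star center $m_v$ to satisfy $m_v\notin D$. If $u$ lay in a different star $G_{m'}$ with $m'\notin D$, the edge $\{u,v\}$ would be inter-star and would place $u\in Y$ by definition; the residual case $m'\in D$ can be excluded by combining the local maximality in the choice of $dom(v)$ with the fact that $u\notin Z$ constrains the number of neighbors $u$ has in $M\cup C$, so that choosing $u$ over a same-star vertex would contradict the density bound. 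Once the placement is pinned down, Lemma~\ref{lem:insidestars} applies verbatim and the final arithmetic is routine.
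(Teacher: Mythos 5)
Your overall strategy --- bounding $|D|$ via Lemma~\ref{thm:largeneighbourhood} and partitioning $D'$ according to where the dominator lands ($M$, $Y$, $Z$, or inside a star) --- is the paper's strategy, and your items 1--3 match the paper's proof term for term. The genuine gap is in item~4. Your key claim there is that every $u=dom(v)\notin M\cup Y\cup Z$ lies in the star $N_{E'}(m)$ hosting $v$. This fails when $v$ itself belongs to $M$: a center $m\in M\setminus D$ with $m\notin N[D]$ lies in $R$, so $dom(m)$ is defined, and $dom(m)$ may be a leaf of a \emph{different} star (a neighbour of $m$ that was assigned to an earlier $m_j$ in the construction of $E'$). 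The edge $\{m,dom(m)\}$ does not place $dom(m)$ in $Y$, since the definition of $Y$ requires the other endpoint of the witnessing edge to lie in some $N_{E'}(m_j)$ and $m\in M$ lies in none of these sets; nor need $dom(m)$ lie in $Z$ or in $M$. Such a vertex therefore falls into your fourth piece, yet Lemma~\ref{lem:insidestars} does not count it: that lemma only bounds dominators of \emph{leaves} $u\in N_{E'}(m)$, never dominators of star centers. The paper closes exactly this hole with a separate term of $|M|$ for the dom-edges that ``point away from $M$'' (in addition to the $|M|$ for dom-edges pointing into $M$, which is your item~1).

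The omission is visible in your arithmetic: your five terms sum to $(c+1)+1+6c^2t+(2c+1)+(2t+2)c=6c^2t+(2t+5)c+3$, not $6c^2t+(2t+5)c+4$ as written --- the missing $+1$ is precisely the $|M|$ needed for the dominators of vertices of $M\cap R$. Adding a fifth piece $\{dom(v):v\in M\cap R\}$, of size at most $|M|$, repairs the decomposition and recovers the stated constant. Separately, your sketched treatment of the sub-case in which $dom(v)$ lies in a star whose center belongs to $D$ (``can be excluded by combining the local maximality\ldots'') is not yet an argument: nothing in the algorithm prevents $dom(v)$ from being a leaf of such a star, and since the definition of $Y$ preceding Corollary~\ref{crl:numedgesbetweendiamonds} requires $m_i\notin D$, these vertices are not automatically absorbed into $Y$ either. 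You would need either to enlarge $Y$ to cover them (as the paper implicitly does when it restates $Y$ without that condition inside the proof of Lemma~\ref{lem:mainlemma}) or to supply a concrete argument for this case.
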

\begin{proof}
The set~$D$ has size at most~$(c+1)|M|$ according
to Lemma~\ref{thm:largeneighbourhood}. As~$M$ is 
a dominating set also with respect to the edges~$E'$,
it suffices to determine~$|\{ dom(u) : u\in (N_{E'}
[M\setminus D]\setminus N[D])\}|$. According to 
Corollary~\ref{crl:numedgesbetweendiamonds}, the 
set~$Y=\{u\in N_{E'}(m_i) :$ there is~$v\in N_{E'}(m_j)$, 
$i\neq j$ and~$\{u,v\}\in E(G)\}$ has size at most 
$6c^2t|M|$. In particular, there are at most so 
many vertices~$dom(u)\in N_{E'}(m_i)$ with 
$u\in N_{E'}(m_j)$ for~$i\neq j$. Clearly, at 
most~$|M|$~$dom$-edges can point to~$M$ and 
at most~$|M|$~$d$-edges point away from~$M$.
Together, this bounds the number of~$dom$-edges 
that go from one star to another star. According to 
Lemma~\ref{lem:Z}, there are only few vertices 
which are highly connected to~$M\cup C$, that is, 
the set~$Z=\{u \in N_{E'}(M\setminus D) : 
|N(u)\cap (M\cup C)|>4c\}$ satisfies~$|Z|< |M\cup C|$. 
We have~$|C|\leq 2c|M|$, as each~$C_m$ has size at 
most~$2c$. It remains to count the image of~$dom$ 
inside the stars which do not point to~$Y$ or~$Z$. 
According to Lemma~\ref{lem:insidestars}, this image has size 
at most~$(2(t-1)+4)c|M|$. In total, we hence find a set of size
\begin{align*}
 & (c+1)|M|+6c^2t|M|+2|M|+(2c+1)|M|+(2(t-1)+4)c|M|\\
 \leq & \; 6c^2t+(2t+5)c+4.
\end{align*}
\end{proof}

Our main theorem is now a corollary of 
Lemma~\ref{lem:exclude} and~\ref{lem:mainlemma}. 

\begin{theorem}\label{thm:main}
Let~$\mathcal{C}$ be a class of graphs of orientable 
genus at most~$g$ (non-orientable genus at most~$\tilde{g}$ resp.). 
The modified algorithm computes~$\Oof(g)$-approximation 
($\Oof(\tilde{g})$-approximation resp.) for the dominating set 
in~$\Oof(g)$ ($\Oof(\tilde{g})$ resp.) communication rounds. 
\end{theorem}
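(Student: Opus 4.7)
The plan is to combine the preceding lemmas: $t$ is bounded via Lemma~\ref{lem:exclude}, the density parameter $c$ is bounded via Lemmas~\ref{lem:closureminor} and~\ref{lem:dens}, and then Lemma~\ref{lem:mainlemma} supplies the approximation guarantee.

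I would fix $G\in\mathcal{C}$ with $g(G)\leq g$. Since every \depthone minor $H$ of $G$ is in particular a minor of $G$, Lemma~\ref{lem:closureminor} gives $g(H)\leq g$, and Lemma~\ref{lem:dens} then yields an edge-density bound $\epsilon(H)\leq c$ for some constant $c$ depending only on $g$. This value of $c$ is a property of the class $\mathcal{C}$ (not of the particular input graph) and can therefore be supplied to the modified algorithm as its input, as required in Section~\ref{sec:local-approx}. Likewise, Lemma~\ref{lem:exclude} implies that $G$ excludes $K_{4g+3,3}$ as a minor, so it also excludes $K_{4g+3,3}$ as a \depthone minor; hence we may take $t:=4g+3$. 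Both $c$ and $t$ are thus bounded in terms of $g$ alone.

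With these parameters in hand, Lemma~\ref{lem:mainlemma} immediately yields that the modified algorithm returns a dominating set of size at most $\bigl(6c^2 t+(2t+5)c+4\bigr)\cdot\gamma(G)$, a constant factor depending only on $g$, which is the claimed $\Oof(g)$-approximation. For the round complexity, I would observe that the Phase~1 test for the existence of a covering set $A\subseteq V(G)\setminus\{v\}$ of size at most $2c$ depends only on the $2$-neighborhood $N^2[v]$, and that the Phase~2 computations of $d_{V-D}(v)$ and $\Delta_{V-D}(v)$ each depend only on information within a constant-radius neighborhood of $v$. Thus the algorithm is $\Oof(1)$-local and terminates well within the $\Oof(g)$ rounds claimed.

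The non-orientable case is completely analogous, substituting $\tilde{g}$ for $g$, using the inequalities $|E(G)|\leq 3|V(G)|+3\tilde{g}(G)-3$ and the exclusion of $K_{2\tilde{g}+3,3}$ from Lemmas~\ref{lem:dens} and~\ref{lem:exclude}, and invoking Lemma~\ref{lem:closureminor} for $\tilde{g}$. The only step that requires some care is making sure that a single constant $c$ uniformly bounds the density of all \depthone minors simultaneously: this is exactly what Lemma~\ref{lem:closureminor} provides, and it is necessary because the proof of Lemma~\ref{lem:mainlemma} invokes the density assumption not on $G$ itself but on an auxiliary \depthone minor built from $D$ and $M$ (and similarly in Lemma~\ref{thm:largeneighbourhood}). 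This is the only nontrivial bookkeeping; once it is in place the theorem follows as a direct corollary.
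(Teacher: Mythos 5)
Your proposal is correct and follows exactly the paper's route: the paper derives Theorem~\ref{thm:main} as an immediate corollary of Lemma~\ref{lem:exclude} (giving $t=4g+3$, resp.\ $t=2\tilde g+3$) and Lemma~\ref{lem:mainlemma}, with the density parameter $c$ supplied via Lemmas~\ref{lem:closureminor} and~\ref{lem:dens}. You make explicit the locality of both phases and the point that a single $c$ must bound all depth-$1$ minors uniformly, which the paper leaves implicit, but the argument is the same.
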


For the special case of planar graphs, our analysis shows that the algorithm 
computes a~$199$-approximation. This is not much worse 
than Lenzen et al.'s original analysis (130), however, off by a factor 
of almost~$4$ from Wawrzyniak's~\cite{better-upper-planar} improved
analysis (52).

\section{Improving the Approximation Factor with Preprocessing}\label{sec:improved}

 We now show that the constant approximation 
factors related to the genus~$g$, derived
in the previous section, can be improved, using a 
local preprocessing step. 

\begin{lemma}\label{lem:findk33}
Given a graph~$G$ and a vertex~$v\in V(G)$. 
A \emph{canonical subgraph 
$K_v\subseteq G$ of~$v$} on at most~$24$ vertices 
such that 
$v\in V(K_v)$ and~$K_{3,3}$ is a \depthone minor of~$K_v$,
can be computed locally in at most~$6$ communication rounds. 
\end{lemma}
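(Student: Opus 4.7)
Plan. The proof is by brute-force local search enabled by the small radius. In $6$ communication rounds, vertex $v$ can collect the labeled induced subgraph on $N^6[v]$ — every identifier and every edge inside its $6$-ball — after which no further communication is needed. Locally at $v$ then enumerate every vertex set $S\subseteq N^6[v]$ with $v\in S$ and $|S|\leq 24$; for each, set $K=G[S]$ and test whether $K_{3,3}$ is a depth-$1$ minor of $K$ by enumerating all partitions of $V(K)$ into at most six pairwise disjoint stars and checking that the bipartite adjacency pattern of the contracted model contains $K_{3,3}$. Among all sets passing the test, return the one whose sorted tuple of identifiers is lexicographically smallest (with the edge set as a secondary key); since identifiers are globally unique, every vertex agrees on this choice, so the output is canonical.

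The size bound of $24$ comes directly from the structure of $K_{3,3}$: a depth-$1$ minor model is a family of six pairwise disjoint stars, and since every vertex of $K_{3,3}$ has degree exactly three, each star needs only its center plus at most three leaves — one per cross-edge to an adjacent star — which gives $6\cdot 4 = 24$ vertices. All larger stars can be trimmed down to this size without destroying the minor.

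The round bound of $6$ follows by showing that whenever $v$ lies in any $K_{3,3}$ depth-$1$ minor model in $G$, one such model can be chosen in which every vertex is at graph-distance at most $6$ from $v$. Starting from $v$: at most one step reaches the center of $v$'s own star; at most two more steps (a cross-edge plus one step inside the neighbouring star) reach the center of any opposite-side star; and at most three further steps (one step to an interface vertex, one cross-edge, one step inside a same-side star) reach the farthest vertex of any same-side star. To realise all these walks simultaneously one chooses the minor model so that, within each opposite-side star, a single vertex is used as the interface for all incident cross-edges; this freedom in selecting interfaces is always available after the trimming above.

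The main obstacle is precisely this eccentricity calculation. A careless choice of cross-edge interfaces can push the distance from $v$ to a far same-side leaf up to $7$ or $8$, breaking the $6$-round bound; one must argue that among all valid $K_{3,3}$ depth-$1$ minor models containing $v$, at least one admits a configuration of interfaces keeping every vertex within distance $6$ of $v$. Once this structural claim is established, the rest of the proof is a finite enumeration inside $N^6[v]$ and a routine appeal to the standard LOCAL-model protocol for gathering $r$-neighbourhoods.
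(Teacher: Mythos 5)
Your overall strategy --- gather $N^6[v]$ in six rounds, brute-force over all candidate vertex sets of size at most $24$ inside that ball, and break ties by identifiers to make the output canonical --- is exactly the route the paper takes, and your $6\cdot 4=24$ count (center plus at most one interface leaf per incident edge of $K_{3,3}$, after trimming) is the same bound as the paper's ``six branch vertices plus at most two extra vertices per edge''. The one step that carries all the content, however, is the claim that some minimal depth-$1$ minor model of $K_{3,3}$ through $v$ lies entirely inside $N^6[v]$, and this is precisely the step you leave open: you name it as ``the structural claim'' that ``must be argued'' and then do not argue it. Moreover, the mechanism you propose for it is not available. You cannot ``choose the minor model so that, within each opposite-side star, a single vertex is used as the interface for all incident cross-edges'': which vertex of a star is adjacent to which other star is dictated by the edges actually present in $G$, and in general no single vertex of a star (in particular not its center) is adjacent to all three opposite stars --- that is the whole point of allowing star contractions rather than subgraph containment. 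Your step count $1+2+3=6$ silently assumes that cross-edges leave from star centers; as you yourself observe, an honest count with leaf interfaces gives $2+1+2+1+2=8$.

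The claim is nevertheless true and can be closed using only minimality, with no extra freedom of choice. In a vertex-minimal model every leaf $\ell$ of a star $S_a$ must realize some edge $\{a,b\}$ of $K_{3,3}$ (otherwise it could be deleted), hence has a neighbor in $S_b$. To bound $d(v,w)$ when $w$ lies in a same-side star $S_{a'}$, route through the middle star $S_b$ chosen according to the edge that $w$ realizes (or the edge that $v$ realizes, if $w$ is a center): if $w$ realizes $\{a',b\}$, then $v$ reaches the interface of its own star for $\{a,b\}$ in at most $2$ steps, crosses in $1$, traverses $S_b$ in at most $2$, and crosses directly onto $w$ in $1$, for a total of $6$; the cases where $w$ is a center or lies in an opposite-side star come out at most $6$ by similar (easier) routings. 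Without this case analysis the radius bound, and hence the $6$-round bound, is unproven --- so there is a genuine gap where you flagged one. (For what it is worth, the paper's own proof also asserts the diameter-$6$ bound without justification; but your write-up explicitly identifies it as the crux and then substitutes an invalid argument for it.)
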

\begin{proof}\label{alg:findk33}
The proof is constructive. Every \depthone minor of 
$K_{3,3}$ has diameter 
at most~$6$. Therefore, we can consider a subgraph 
$H=G[N^6(v)]$ and check whether there 
exists a minimal \depthone minor of~$K_{3,3}$ 
in~$H$ which includes~$v$ as a vertex. If this is the case, 
we 
output it as~$K_v$; otherwise we output the 
empty set. Furthermore,~$K_{3,3}$ has~$9$ 
edges and in a \depthone minor, each edge can be 
subdivided at most twice. Thus, in every minimal 
\depthone minor of~$K_{3,3}$, 
there are at most~$24$ vertices.
\end{proof}

To improve the approximation factor, we 
propose the following modified algorithm.

\begin{algorithm}[h!]
\caption{Dominating Set Approximation Algorithm for Graphs of Genus~$\leq g$}
\label{alg:modified-approx}
\begin{algorithmic}[1]
\vspace{2mm}
\STATE Input: Graph~$G$ of genus at most $g$
\STATE \textbf{Run Phase~1 of Algorithm~\ref{alg:lenzen}~} 

\STATE~$(*$ \emph{Preprocessing Phase} ~$*)$

\STATE \textbf{for}~$v\in V-D$ (in parallel) \textbf{do}
\STATE ~~~~~ \textbf{compute~$K_v$ in~$G-D$ by Algorithm~\ref{alg:findk33}}

\STATE \textbf{for}~$i=1..g$ \textbf{do}
\STATE ~~\textbf{for}~$v\in V-D$ (in parallel) \textbf{do}
\STATE ~~~~~ \textbf{if}~$K_v\neq\emptyset$ \textbf{then}
\STATE ~~~~~~~~ \textbf{chosen : = true}
\STATE ~~~~~~~~ \textbf{for all~$u\in N^{12}(v)$}
\STATE ~~~~~~~~~~\textbf{if~$K_u\cap K_v \neq\emptyset$ and~$u < v$}\textbf{ then chosen := false}
\STATE ~~~~~~~~ \textbf{if (chosen = true) then }$D := D\cup V(K_v)$
\STATE \textbf{Run Phase~2 of Algorithm~\ref{alg:lenzen}~}
\end{algorithmic}
\end{algorithm}

\begin{theorem}\label{thm:modified}
Algorithm~\ref{alg:modified-approx} provides a~$24g+\Oof(1)$ MDS approximation 
for graphs of genus 
at most~$g$, and requires~$12g+\Oof(1)$ communication rounds.
\end{theorem}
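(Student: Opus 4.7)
The plan is to decompose the final output as $|D\cup D'|=|D_1|+|D_p|+|D_2|$, where $D_1$ is the output of Phase~1, $D_p$ is added during preprocessing, and $D_2$ is added by Phase~2, and to bound each piece separately. Correctness is immediate, since Phase~2 of Algorithm~\ref{alg:lenzen} already produces a dominating set and preprocessing only enlarges $D$. For the round count, Phase~1 and Phase~2 each use $\Oof(1)$ rounds; Lemma~\ref{lem:findk33} computes each $K_v$ in $6$ rounds; and each of the $g$ preprocessing iterations needs $12$ rounds to inspect the $N^{12}$-neighborhood of every vertex. In total this is $12g+\Oof(1)$ rounds.

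The central claim is that $|D_p|\leq 24g+\Oof(1)$ and, simultaneously, that the residual graph $G-D$ is $K_{3,3}$-minor-free after preprocessing. I would argue that the $K_v$'s added over all iterations form a family of pairwise vertex-disjoint subgraphs of $G$, each containing $K_{3,3}$ as a depth-$1$ minor, hence as a minor. Within a single iteration, two simultaneously chosen vertices $v_1,v_2$ must satisfy $K_{v_1}\cap K_{v_2}=\emptyset$: any common vertex $w$ would lie in $N^6(v_1)\cap N^6(v_2)$ by the diameter bound in Lemma~\ref{lem:findk33}, so $v_2\in N^{12}(v_1)$ would trigger the ID-based veto in line~11 for whichever of $v_1,v_2$ has the larger identifier. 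Across iterations disjointness is automatic, because each iteration selects only from $V\setminus D$. By Lemma~\ref{lem:decreasegenus} the number of pairwise-disjoint $K_{3,3}$ minor models in $G$ is at most $\max\{g(G),2\tilde{g}(G)\}=\Oof(g)$, so the total number of extracted $K_v$'s — each of size at most $24$ — is $\Oof(g)$, giving $|D_p|\leq 24g+\Oof(1)$. The same counting forces $G-D$ to contain no $K_{3,3}$ minor once preprocessing terminates: otherwise, adjoining the leftover minor model to the extracted family would exceed the bound of Lemma~\ref{lem:decreasegenus}.

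With $G-D$ being $K_{3,3}$-minor-free, every depth-$1$ minor of $G-D$ has edge density at most $3$, and of course $K_{3,3}$ is excluded as a depth-$1$ minor of $G-D$. I would therefore replay the Phase~2 analysis of Lemma~\ref{lem:mainlemma} on the graph $G-D$ with the improved parameters $c=3$ and $t=3$ rather than the genus-dependent constants. Since the star subgraphs $G_m$ constructed in that proof have centers in $M\setminus D$ and leaves in $V\setminus D$, the contracted depth-$1$ minor $\tilde H$ also lives in $G-D$, so Corollary~\ref{crl:numedgesbetweendiamonds}, Lemma~\ref{lem:Z} and Lemma~\ref{lem:insidestars} all go through with these improved constants and yield $|D_2|=\Oof(|M|)$. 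Combining with $|D_1|=\Oof(|M|)$ from Lemma~\ref{thm:largeneighbourhood} (using the density constant from Lemma~\ref{lem:dens}) and $|D_p|\leq 24g+\Oof(1)$, I obtain
\[
|D\cup D'|\leq \Oof(|M|)+24g\leq (24g+\Oof(1))\cdot|M|,
\]
which is the claimed approximation factor.

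The main obstacle I anticipate is adapting the bookkeeping of Lemma~\ref{lem:mainlemma} to the residual graph $G-D$: that lemma is stated for a graph whose depth-$1$ minors globally satisfy the density and exclusion bounds, whereas after preprocessing only $G-D$ enjoys the improved $c=3,\,t=3$. The fix is to inspect the proof of Lemma~\ref{lem:mainlemma} and verify that each of the objects it builds — the edge set $E'$, the stars $G_m$, the minor $\tilde H$, and the auxiliary sets $Y$ and $Z$ — is constructed entirely inside $(M\setminus D)\cup N_{E'}(M\setminus D)\subseteq V(G)\setminus D$. Once this is in hand, the $K_{3,3}$-minor-freeness of $G-D$ plugs directly into the counting and delivers the $\Oof(|M|)$ bound on $|D_2|$.
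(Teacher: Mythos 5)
Your proof follows essentially the same route as the paper's: correctness is immediate, the ID-veto plus the diameter-$6$ bound make the extracted $K_v$'s pairwise disjoint so that Lemma~\ref{lem:decreasegenus} caps the preprocessing cost at $24g$ vertices, the residual graph excludes $K_{3,3}$ as a depth-$1$ minor after $g$ iterations, Lemma~\ref{lem:mainlemma} with $t=3$ handles Phase~2, and the $12$-neighborhood inspections give $12g+\Oof(1)$ rounds. The one imprecision is your claim that $G-D$ is $K_{3,3}$-\emph{minor}-free: preprocessing only detects depth-$1$ minor models (of diameter at most~$6$), and if those run out before all $g$ iterations have each extracted one, your ``adjoin the leftover model'' counting does not apply to a long-diameter $K_{3,3}$ minor that may survive; but the argument only needs exclusion of $K_{3,3}$ as a depth-$1$ minor, and the density constant $c\approx 3$ should be taken from the genus bound of Lemma~\ref{lem:dens}, as the paper does, rather than from minor-freeness.
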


\begin{proof}
The resulting vertex set is
clearly a legal dominating set.  
Moreover, as Phase~1 is unchanged,
we do not have to recalculate~$D$. 

In the preprocessing phase, if for two vertices~$u\neq v$ we choose
both~$K_u,K_v$, then they must be disjoint. 
Since the diameter of any \depthone minor of 
$K_{3,3}$ is at most~$6$, if two such canonical 
subgraphs intersect, the distance between 
$u,v$ can be at most~$12$. On the other hand, by 
Lemma~\ref{lem:decreasegenus}, there are at 
most~$g$ disjoint such models. So in
the \emph{preprocessing} phase, 
we select at most~$24g$ extra vertices for the dominating set.

Once the \emph{preprocessing} phase is finished, the remaining graph 
is locally embeddable. In fact, it does not have any 
$K_{3,3}$ \depthone minor. This is  
guranteed by repeating the process of~$K_{3,3}$ 
\depthone minor elimination~$g$ times in the 
preprocessing phase, and by the fact that in each run, 
if there are \depthone minors of~$K_{3,3}$'s, at
least one of them will be eliminated. Moreover, 
by 
Lemma~\ref{lem:decreasegenus}, there are no 
more than~$g$ such subgraphs. Hence, after 
repeating the procedure~$g$ times, 
the remaining graph is 
locally embeddable with constant edge density.

In order to compute the 
size of the set in the third phase, we 
can use the analysis of Lemma~\ref{lem:mainlemma}
for~$t=3$, which together with the first phase and preprocessing phase,
results in a~$24g+\Oof(1)$-approximation guarantee.

To count the number of communication rounds, note that 
the only change happens in the second phase. In that phase, 
in each iteration, we need~$12$ communication rounds to 
compute the~$12$-neighbourhood. Therefore, the number of 
communication rounds is~$12g + \Oof(1)$.
\end{proof}

This significantly improves the 
approximation upper bound of Theorem~\ref{thm:main}: namely from 
$4(6c^2+2c)g + \Oof(1)$, which, since~$c \le 3.01$ can be as high as 
$241g+\Oof(1)$ in sufficiently large graphs, to~$24g + \Oof(1)$,
at the price of~$12g$ extra communication rounds.

\section{A Logical Perspective}\label{sec:stoneage}

Interestingly, as we will elaborate in the following, a small modification of 
Algorithm~\ref{alg:lenzen} can be interpreted both from a distributed computing
perspective, namely as a local algorithm
of constant distributed time complexity, as well as from a logical perspective.

First order logic has atomic formulas of the form 
$x=y, x<y$ and~$E(x,y)$, where~$x$ and~$y$ are 
first-order variables. The set of first order formulas is 
closed under Boolean combinations and existential and 
universal quantification over the vertices of a graph. 
To define the semantics, we inductively define a satisfaction 
relation~$\models$, where for a graph~$G$, a formula 
$\phi(x_1,\ldots, x_k)$ and vertices~$v_1,\ldots, v_k
\in V(G)$,~$G\models\phi(v_1,\ldots, v_k)$ means 
that~$G$ satisfies~$\phi$ if the free variables~$x_1,\ldots, 
x_k$ are interpreted by~$v_1,\ldots, v_k$, respectively. 
The free variables of a formula are those not in the 
scope of a quantifier, and we write~$\phi(x_, 
\ldots , x_k)$ to indicate that the free variables
of the formula~$\phi$ are among~$x_1,\ldots, x_k$. 
For~$\phi(x_1,x_2)=x_1<x_2$, we have~$G\models
\phi(v_1,v_2)$ if~$v_1<v_2$ with respect to the 
ordering~$<$ of~$V(G)$ and for~$\phi(x_1,x_2)=
E(x_1,x_2)$ we have~$G\models\phi(v_1,v_2)$
if~$\{v_1,v_2\}\in E(G)$. The meaning of the 
equality symbol, the Boolean connectives, and the quantifiers is as expected.

A first-order formula~$\phi(x)$ with one free variable 
naturally defines the set~$\phi(G)=\{v\in V(G) : 
G\models\phi(v)\}$. We say that a formula~$\phi$ 
defines an~$f$-approximation to the dominating 
set problem on a class~$\CCC$ of graphs, if~$\phi(G)$ 
is an~$f$-approximation of a minimum dominating set for 
every graph~$G\in\CCC$. 

Observe that first-order logic is not able to count, in 
particular, no fixed formula can determine a neighbor of 
maximum degree in Line~14 of the algorithm. Observe 
however, that the only place in our analysis that refers to 
the dominator function~$d$ explicitly is Lemma~\ref{lem:insidestars}. 
The proof of the lemma in fact shows that we do not have to 
choose a vertex of maximal residual degree, but that it suffices 
to choose a vertex of degree greater than~$4c+2c(t-1)$ if such a 
vertex exists, or any vertex, otherwise. For every fixed class of 
bounded genus, this number is a constant, however, we have to 
address how logic should choose a vertex among the candidate 
vertices. For this, we assume that the graph is equipped with 
an order relation such that the formula can simply choose the 
smallest candidate with respect to the order. It is now easy
to see that the solution computed by the algorithm is first-order definable.

\section{Conclusion}\label{sec:FO}

This paper presented the first constant round, constant factor
local MDS approximation algorithm for the family of bounded genus graphs.
This result constitutes a major step forward
in the quest for understanding for which graph families such
local approximations exist. Besides the result itself, we believe 
that our analysis introduces several new techniques which may
be useful also for the design and analysis of local algorithms
for more general graphs, and also problems beyond MDS.

Moreover, this paper established an interesting connection between
distributed computing and logic, by presenting a local approximation
algorithm which is first-order logic definable. This also
provides an new perspective on the recently introduced
notion of stone-age distributed computing~\cite{stoneage}:
distributed algorithms making minimal assumptions on the power
of a node.

We believe that our work opens several very interesting directions
for future research. First, it remains open
whether
the local constant approximation result can be generalized to
sparse graphs beyond bounded genus graphs.
 Second, it will be interesting to extend our study of first-order definable approximations,
 an exciting new field.

%
%
%

\newpage
{
\bibliographystyle{abbrv}
\bibliography{references}  
}

\end{document}